\newtheorem{thm}{Theorem}[section]
\newcommand{\Ad}{\tilde{A}}
\DeclareMathOperator*{\Es}{\mathrm{E}}
\DeclareMathOperator*{\argmin}{{ \sf{argmin}}}
\newcommand{\s}{x}
\newcommand{\ug}{u}
\DeclareMathOperator{\y}{\mathrm{y}}
\DeclareMathOperator{\Traj}{\mathbf{x}}
\DeclareMathOperator{\ntimes}{\otimes}
\newcommand{\sv}[1]{\sigma\left({#1}\right)}
\newcommand{\Reg}[1]{R\left({#1}\right)}
\newcommand{\ord}[2]{{#1}_{\left[{#2}\right]}}
\newcommand{\abs}[1]{\left|  {#1}  \right|}
\newcommand{\nc}{n_u}
\DeclareMathOperator*{\mini}{\text{Minimize }}
\newcommand{\Kb}{\mathbf{K}}
\newcommand{\tran}[1]{{#1}^T}
\newcommand{\ns}{n}
\newcommand{\powb}[2]{{\left({#1}\right)}^{{#2}}}
\newcommand{\ExP}[2]{\Es_{{#1}}\left [{#2}\right ]}
\newcommand{\expb}[1]{\exp\left({#1}\right)}
\newcommand{\logb}[1]{\log\left({#1}\right)}
\newcommand{\norm}[1]{\lVert {#1} \rVert}
\newcommand{\tr}[1]{\mathrm{tr}\left( {#1} \right)}
\newcommand{\N}{\mathcal{N}}
\newcommand{\R}{\mathbf{R}}
\newcommand{\inv}[1]{{\left(#1\right)}^{-1}}
\newcommand{\inve}[1]{{#1}^{-1}}
\newcommand{\Gauss}[2]{\N\left({#1},{#2}\right)}
\newcommand{\lai}[2]{\lambda_{#2}\left({#1}\right)}
\newcommand{\lam}[1]{\lambda_{\min}\left({#1}\right)}
\newcommand{\lama}[1]{\lambda_{\max}\left({#1}\right)}
\newcommand{\svai}[2]{\sigma_{#2}\left({#1}\right)}
\newcommand{\svam}[1]{\sigma_{\min}\left({#1}\right)}
\newcommand{\svama}[1]{\sigma_{\max}\left({#1}\right)}
\newcommand{\opt}[1]{{#1}^\ast}
\newcommand{\Htwo}{\mathcal{H}_2}
\newcommand{\Hinf}{\mathcal{H}_\infty}
\newcommand{\LQR}{\mathrm{LQR}}
\newcommand{\C}{\mathcal{C}}
\newcommand{\traj}{\mathbf{x}}
\newcommand{\detb}[1]{\det\left({#1}\right)}
\newcommand{\tranb}[1]{{\left({#1}\right)}^T}
\newcommand{\br}[1]{\left({#1}\right)}
\newcommand{\noise}{w}
\newcommand{\ntraj}{\mathbf{w}}
\newcommand{\Acal}[2]{F\left({#1}\right)}
\newcommand{\Aca}{F}
\newcommand{\wtraj}{\mathbf{w}}
\newcommand{\Dyna}{a}
\newcommand{\DynB}{B}
\newcommand{\Dynf}{\phi}
\newcommand{\nm}{\mathrm{m}}
\newcommand{\nh}{\mathrm{k}}
\newcommand{\FHinf}[1]{q_\infty}
\newcommand{\FHtwo}[1]{q_2}
\newcommand{\FHone}[1]{q_1}
\newcommand{\FCtwo}[1]{q^c_2\left({#1}\right)}
\newcommand{\FCinf}[1]{q^c_\infty\left({#1}\right)}
\newcommand{\CON}{\rm{CON}}
\newcommand{\Mih}{\rm{NCON}}
\newcommand{\OPT}{\rm{OPT}}
\newcommand{\optt}[1]{{{#1}_c}^\ast}
\newcommand{\Vari}[1]{\mathrm{Var}\left({#1}\right)}
\begin{document}
%
% paper title
% can use linebreaks \\ within to get better formatting as desired
% Do not put math or special symbols in the title.
\title{\LARGE \bf
Convex Structured Controller Design
}
%
%
% author names and IEEE memberships
% note positions of commas and nonbreaking spaces ( ~ ) LaTeX will not break
% a structure at a ~ so this keeps an author's name from being broken across
% two lines.
% use \thanks{} to gain access to the first footnote area
% a separate \thanks must be used for each paragraph as LaTeX2e's \thanks
% was not built to handle multiple paragraphs
%

\author{Krishnamurthy~Dvijotham,
        Emanuel~Todorov,
        and~Maryam~Fazel% <-this % stops a space
\thanks{K. Dvijotham is with the Department
of Computer Science and Engineering, University of Washington, Seattle,
WA, 98195 USA e-mail: dvij@cs.washington.edu}% <-this % stops a space
\thanks{E. Todorov is with the Departments of Computer Science and Engineering and Applied Mathematics at the University of Washington, Seattle. e-mail: todorov@cs.washington.edu}% <-this % stops a space
\thanks{M. Fazel is with the Department of Electrical Engineering at the University of Washington, Seattle. e-mail: mfazel@uw.edu}%
}

% The paper headers
\markboth{Submitted to IEEE Transactions on Networked Control Systems}%
{Shell \MakeLowercase{\textit{et al.}}: Bare Demo of IEEEtran.cls for Journals}
% The only time the second header will appear is for the odd numbered pages
% after the title page when using the twoside option.
%
% *** Note that you probably will NOT want to include the author's ***
% *** name in the headers of peer review papers.                   ***
% You can use \ifCLASSOPTIONpeerreview for conditional compilation here if
% you desire.

% If you want to put a publisher's ID mark on the page you can do it like
% this:
%\IEEEpubid{0000--0000/00\$00.00~\copyright~2012 IEEE}
% Remember, if you use this you must call \IEEEpubidadjcol in the second
% column for its text to clear the IEEEpubid mark.

% use for special paper notices
%\IEEEspecialpapernotice{(Invited Paper)}

% make the title area
\maketitle

\IEEEpeerreviewmaketitle
\begin{abstract}
We consider the problem of synthesizing optimal linear feedback policies subject to arbitrary convex constraints on the feedback matrix. This is known to be a hard problem in the usual formulations ($\Htwo,\Hinf,\LQR$) and previous works have focused on characterizing classes of structural constraints that allow efficient solution through convex optimization or dynamic programming techniques. In this paper, we propose a new control objective and show that this formulation makes the problem of computing optimal linear feedback matrices convex under arbitrary convex constraints on the feedback matrix. This allows us to solve problems in decentralized control (sparsity in the feedback matrices), control with delays and variable impedance control. Although the control objective is nonstandard, we present theoretical and empirical evidence that it agrees well with standard notions of control.  We also present an extension to nonlinear control affine systems. We present numerical experiments validating our approach.
\end{abstract}

% Note that keywords are not normally used for peerreview papers.
%\begin{IEEEkeywords}
%Decentralized Control, Convex Optimization.
%\end{IEEEkeywords}

\section{INTRODUCTION}
Linear feedback control synthesis is a classical topic in control theory and has been extensively studied in the literature. From the perspective of stochastic optimal control theory, the classical result is the existence of an optimal linear feedback controller for systems with linear dynamics, quadratic costs and gaussian noise (LQG systems) that can be computed via dynamic programming  \cite{kalman1960new}. However, if one imposes additional constraints on the feedback matrix (such as a sparse structure arising from the need to implement control in a decentralized fashion), the dynamic programming approach is no longer applicable. In fact, it has been shown that the optimal control policy may not even be linear \cite{witsenhausen1968counterexample} and that the general problem of designing linear feedback gains subject to constraints is NP-hard \cite{blondel1997np}.\\
Previous approaches to synthesizing structured controllers can be broadly categorized into three types: Frequency Domain Approaches\cite{rotkowitz2002decentralized}\cite{qi2004structured}\cite{rotkowitz2006}\cite{shah2013}, Dynamic Programming Approaches \cite{fan1994centralized}\cite{swigart2010explicit}\cite{lamperski2013optimal} and Nonconvex optimization methods \cite{linfarjovTAC13admm}\cite{apkarian2008mixed}\cite{burke2006hifoo}. The first two classes of approaches find \emph{exact} solutions to structured control problems for special cases. The third class of approaches tries to directly solve the optimal control problem (minimizing the $\Htwo$,$\Hinf$ norm) subject to constraints on the controller, using nonconvex optimization techniques. These are generally applicable, but are susceptible to local minima and slow convergence (especially for nonsmooth norms such as $\Hinf$).\\
In this paper, we take a different approach: We reformulate the structured control problem using a family of new control objectives (section \ref{sec:Problem}). We develop these bounds as follows: The $\Htwo$ and $\Hinf$ norms can be expressed as functions of singular values of the linear mapping from disturbance trajectories to state trajectories. This mapping is a highly nonlinear function of the feedback gains. However, the inverse of this mapping has a simple linear dependence on the feedback gains. Further, the determinant of the mapping has a fixed value independent of the closed loop dynamics - this is in fact a finite horizon version of Bode's sensitivity integral and has been studied in \cite{iglesias2001tradeoffs}. By exploiting both these facts, we develop upper bounds on the $\Htwo,\Hinf$ norms in terms of the singular values of the inverse mapping.  We show that these upper bounds have several properties that make them desirable control objectives. For the new family of objectives, we show that the resulting problem of designing an optimal linear state feedback matrix, under arbitrary convex constraints, is convex (section \ref{sec:Main}). Further, we prove suboptimality bounds on how the solutions of the convex problems compare to the optima of the original problem. Our approach is directly formulated in state space terminology and does not make any reference to frequency domain concepts. Thus, it applies directly to time-varying systems. We validate our approach numerically and show that the controllers synthesized by our approach achieve good performance (section \ref{sec:NumExpt}).\\
The work presented here is an extension of a recent conference publication \cite{Dj2013}. However, this paper contains a significant reformulation of the results presented there and also has new results. The conference paper \cite{Dj2013} was formulated in terms of the eigenvalues of the covariance matrix of trajectories. In this paper, we look at the linear map from noise to state trajectories (denoted by $\Aca$ in this paper), which is a Cholesky factor of the covariance matrix. This allows us to produce simpler proofs of convexity and deal with a more general class of objectives. For example, the nuclear norm (and more generally Ky-Fan norms) of $\Aca$ is a valid objective in our formulation while it was not in \cite{Dj2013}, since it is equal to the sum of square roots of eigenvalues of the covariance matrix which is a nonconvex function. Further, we provide an analysis quantifying how well the solutions to our convex objectives perform in terms of the original nonconvex $\Htwo,\Hinf$ objectives. We present numerical results comparing results of our formulation to other nonconvex approaches for structured controller synthesis. Finally, we also present a generalization of our approach to nonlinear systems.
 %We also propose a generalization of the approach to nonlinear systems. To the best of our knowledge, this is the first instance of a convex feedback-control design paradigm for a nonlinear system. In the nonlinear case, our objective turns out to be hard to compute exactly, but we show how they can be used in simple cases and leave further extensions for future work.

\section{PROBLEM FORMULATION} \label{sec:Problem}
Consider a finite-horizon discrete-time linear system in state-space form:
\begin{align*}
& \s_1 = D_0\noise_0 \\
& \s_{t+1} = A_t\s_t + B_t \ug_t + D_{t}\noise_{t}, \quad t=1,2,\ldots,N-1.
\end{align*}
Here $t=0,1,2,\ldots,N$ is the discrete time index, $\s_t \in \R^n$ is the plant state, $\noise_t \in \R^n$ is an exogenous disturbance and $\ug_t \in \R^{\nc}$ is the control input. We employ static state feedback:
\[\ug_t=K_t\s_t.\]
Let $\Kb=\{K_t: t=1,2,\ldots,N-1\}$ and denote the closed-loop system dynamics by
\[\Ad_t(K_t)=A_t+B_tK_t.\]
Let $\lama{M}$ denote the maximum eigenvalue of an $l \times l$ symmetric matrix $M$, $\lam{M}$ the minimum eigenvalue and $\lai{M}{i}$ the $i$-th eigenvalue in descending order:
 \begin{align*}
&  \lai{M}{l}=\lam{M}\leq \lai{M}{l-1} \leq  \ldots \leq \lama{M}=\lai{M}{1}.
 \end{align*}
Similarly, singular values of a general rank $l$ matrix $M$ are:
\begin{align*}
& \svai{M}{l} =\svam{M}\leq \svai{M}{2} \leq \ldots \leq \svama{M}=\svai{M}{1}.
\end{align*}
$I$ denotes the identity matrix. Boldface smallcase letters denote trajectories:
 \[\traj=\begin{pmatrix} \s_1 \\ \vdots \\ \s_N \end{pmatrix},\wtraj=\begin{pmatrix} \noise_0 \\ \vdots \\ \noise_{N-1} \end{pmatrix}\]
 For $z \in \R^n$,
 \[\Vari{z}=\frac{1}{n}\sum_{i=1}^n \powb{z_i-\frac{\sum_{i=1}^n z_i}{n}}{2}.\]
$\ord{z}{i}$ is the $i$-th largest component of $z$ and $|z|$ the vector with entries $|z_1|,\ldots,|z_n|$. Finally, $\N(\mu,\Sigma)$ denotes a Gaussian distribution with mean $\mu$ and covariance matrix $\Sigma$.

There is a linear mapping between disturbance and state trajectories for a linear system. This will play a key role in our paper, and we denote it by
\begin{align*}
& \traj = \Acal{\Kb}{N} \ntraj,\\
& \text{where } \Acal{\Kb}{N}= \\
& \left[\begin{matrix}
  D_0               & 0         & \ldots        & \quad 0 \\
  \Ad_1 D_0         & D_1       & \ldots        & \quad 0 \\
  \Ad_2\Ad_1 D_0    & \Ad_2 D_1 & \ldots        & \quad 0 \\
  \vdots            & \vdots    & \vdots        & \quad \vdots\\
  \prod_{\tau=1}^{N-1} \Ad_{N-\tau}D_0 & \prod_{\tau=2}^{N-1} \Ad_{N-\tau}D_1 & \ldots & \quad D_{N-1}
 \end{matrix}\right].
 \end{align*}
Our formulation differs from standard control formulations in the following ways:
 \begin{itemize}
\item[1] We assume that the controller performance is measured in terms the norm of the system trajectory $\tran{\traj}\traj$ (see section \ref{sec:PenControl} for an extension that includes control costs).
\item[2] As mentioned earlier, we restrict ourselves to have static state feedback $\ug_t=K_t\s_t$ (section \ref{sec:DynOutput} discusses dynamic output feedback).
\item[3] We assume that $D_t$ is square and invertible.%, which means that the noise affects all dimensions of the state.
\end{itemize}
Finite-horizon versions of the $\Htwo$ and $\Hinf$ norms of the system are given by:
\begin{align}
& \FHtwo{N}(\Kb) = \ExP{\noise_t \sim \Gauss{0}{I}}{\sum_{t=1}^N \tran{\s_t}\s_t}=\ExP{}{\tr{\traj\tran{\traj}}} \nonumber \\
& =\tr{\tran{\Acal{\Kb}{N}}\Acal{\Kb}{N}} = \sum_{i=1}^{nN} {\svai{\Acal{\Kb}{N}}{i}}^2 \label{eq:Ftwo} \\
& \FHinf{N}(\Kb) = \sqrt{\max_{\wtraj \neq 0} \frac{\sum_{t=1}^N \tran{\s_t}{\s_t}}{\sum_{t=0}^{N-1} \tran{\noise_t}{\noise_t}}} \nonumber \\
& =\max_{\wtraj \neq 0} \frac{\norm{\Acal{\Kb}{N}\ntraj}}{\norm{\ntraj}}=\svama{\Acal{\Kb}{N}}. \label{eq:Finf}
\end{align}
If there are no constraints on $\Kb$, these problems can be solved using standard dynamic programming techniques. However, we are interested in synthesizing structured controllers. We formulate this very generally: We allow \emph{arbitrary} convex constraints on the set of feedback matrices: $\Kb \in \C$ for some convex set $\C$. Then, the control synthesis problem becomes
\begin{align}
& \mini_{\Kb \in \C} \FHtwo{N}(\Kb)\label{eq:TwoOrig} \\
& \mini_{\Kb \in \C} \FHinf{N}(\Kb)\label{eq:InfOrig}
\end{align}
The general problem of synthesizing stabilizing linear feedback control, subject even to simple bound constraints on the entries of $K$, is known to be hard \cite{blondel1997np}. Several hardness results on linear controller design can be found in \cite{blondel2000survey}. Although these results do not cover the problems \eqref{eq:TwoOrig}\eqref{eq:InfOrig}, they suggest that \eqref{eq:TwoOrig}\eqref{eq:InfOrig} are hard optimization problems. In this paper, we propose an alternate objective function based on the singular values of the inverse mapping $\inve{\Acal{\Kb}{N}}$ and prove that this objective can be optimized using convex programming techniques under \emph{arbitrary} convex constraints on the feedback matrices $\Kb=\{K_t\}$. Given the above hardness results, it is clear that the optimal solution to the convex problem will not match the optimal solution to the original problem. However, we present theoretical and numerical evidence to suggest that the solutions of the convex problem we propose (\eqref{eq:ObjHtwo},\eqref{eq:ObjHinf}) approximate the solution to the original problems (\eqref{eq:TwoOrig},\eqref{eq:InfOrig}) well for several problems.

\subsection{Control Objective} \label{sec:Just}

The problems \eqref{eq:TwoOrig},\eqref{eq:InfOrig} are non-convex optimization problems, because of the nonlinear dependence of $\Acal{\Kb}{N}$ on $\Kb$. In this section, we will derive convex upper bounds on the singular values of $\Acal{\Kb}{N}$ that can be optimized under arbitrary convex constraints $\C$. We have the following results (section \ref{sec:Appendix}, theorems \ref{thm:HinfUB}, \ref{thm:HtwoUB}):
\begin{align*}
& \FHinf{N}(\Kb) \leq \br{\prod_{t=0}^{N-1} \detb{D_t}}\powb{\frac{\sum_{i=1}^{\ns N-1} \svai{\inve{\Acal{\Kb}{N}}}{i}}{\ns N-1}}{\ns N-1} \\
& \FHtwo{N}(\Kb) \leq nN\br{\prod_{t=0}^{N-1} \detb{D_t}}^{2}\powb{\svama{\inve{\Acal{\Kb}{N}}}}{2(\ns N-1)}
\end{align*}
\begin{figure}\begin{center}
\includegraphics[width=.7\columnwidth]{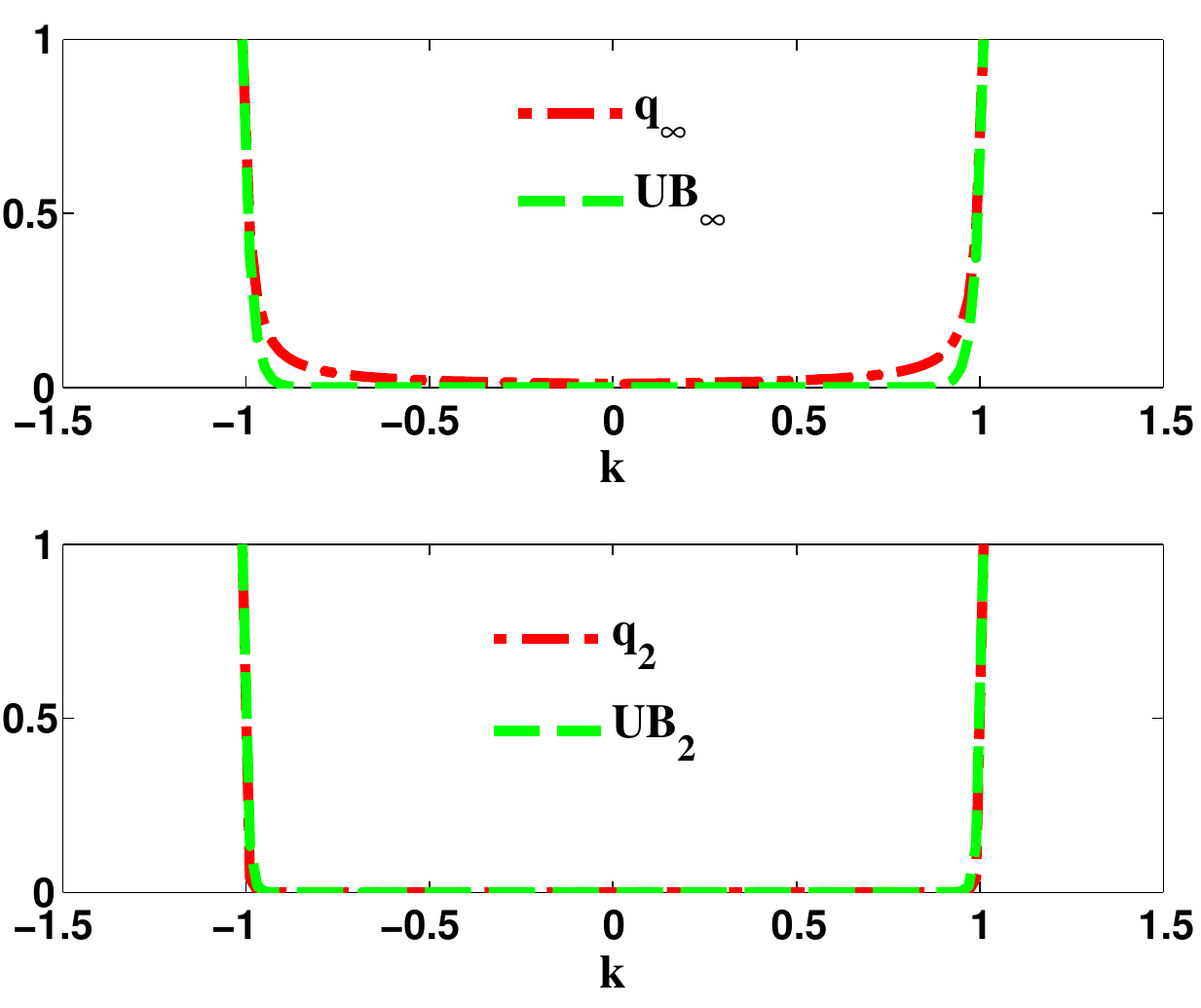}
\caption{Convex Surrogate vs Original Objective (rescaled to lie in [0,1]):  $\FHinf{}(k)$ vs $UB_\infty(k)$ (top),$\FHtwo{}(k)$ vs $UB_2(k)$ (bottom)} \label{fig:ObjPlot}
\end{center}\end{figure}
To illustrate the behavior of these upper bounds (denoted $UB_\infty,UB_2$), we plot them for a scalar linear system
\[\s_{t+1}=\ug_t+\noise_t,\s_t,\ug_t=k\s_t,k \in \R\]
over a horizon $N=100$ in figure \ref{fig:ObjPlot}. When $|k|<1$, this system is unstable and otherwise it is stable. Thus, $|k|$ is a measure of the ``degree of instability'' of the system. As expected, the original objectives grow slowly to the point of instability and then blow up. The convex upper bounds are fairly loose upper bounds and increase steadily. However, the rate of growth increases with degree of instability. Similar results are observed for $n>1$.

\subsection{A General Class of Control Objectives}
Inspired by the upper bounds of the previous section, we formulate the controller design problem as follows:
\begin{align}
& \mini_{\Kb \in \C}               \FCtwo{\Kb}=\svama{\inve{\Acal{\Kb}{N}}} \text{ (surrogate to $\FHtwo{N})$}\label{eq:ObjHtwo} \\
& \mini_{\Kb \in \C}               \FCinf{\Kb}=\sum_{i=1}^{\ns N-1} \svai{\inve{\Acal{\Kb}{N}}}{i} \text{ (surrogate to $\FHinf{N})$}\label{eq:ObjHinf}
\end{align}
The objectives \eqref{eq:ObjHtwo},\eqref{eq:ObjHinf} are just two of the control objectives that are allowed in our framework. We can actually allow a general class of objectives that can be minimized for control design. From \cite{lewis1995convex}, we know that for any \emph{absolutely invariant} convex function $f(x)$ on $\R^n$, the function $g(X)=f(\sv{X})$ on $\R^{n \times n}$ is convex. This motivates us to consider a generalized control objective:
\begin{align}
& \mini_{\Kb}  \underbrace{f\left(\sv{\inve{\Acal{\Kb}{N}}}\right)}_{\text{Controller Performance}}+\underbrace{\Reg{\Kb}}_{\text{Minimize Control Effort}}\nonumber \\
& \text{Subject to } \Kb \in \C \label{eq:ObjGen}
\end{align}
where $\C$ is a convex set encoding the structural constraints on $\Kb$ and $\Reg{\Kb}$ is a convex penalty on the feedback gains $\Kb$. We show (in theorem \ref{thm:ConvObj}) that this problem is a convex optimization problem.
%In section \ref{sec:Appl}, we show how many control problems (decentralized control, control under specified time delays etc.) can be modeled by choosing $\C$ appropriately.
Common special cases for $f$ are:
\begin{itemize}
\item[1] $f(x) = \norm{x}_\infty$ which gives rise to the spectral norm $\norm{\inv{\Acal{\Kb}{N}}}=\svama{\inv{\Acal{\Kb}{N}}}$, the same as \eqref{eq:ObjHtwo}.
\item[2] $f(x) = \norm{x}_1$ which gives rise to the nuclear norm $\norm{\inv{\Acal{\Kb}{N}}}_*=\sum_{i} \svai{\inv{\Acal{\Kb}{N}}}{i}$.
\item[3] $f(x) = \sum_{i=1}^{k} \ord{\abs{x}}{i}$ which gives rise to the Ky Fan k-norm $\sum_{i=1}^{k} \svai{\inv{\Acal{\Kb}{N}}}{i}$. In particular $f(x) = \sum_{i=1}^{nN-1} \ord{\abs{x}}{i}$ corresponds to \eqref{eq:ObjHinf}.
\end{itemize}
 A common choice for $\Reg{\Kb}$ is $\norm{\Kb}^2$. For decentralized control, $\C$ would be of the form $\C=\{\Kb: {\Kb_t} \in S\}$
where $S$ is the set of matrices with a certain sparsity pattern corresponding to the decentralization structure required. We now present our main theorem proving the convexity of the generalized problem \eqref{eq:ObjGen}.

\section{MAIN TECHNICAL RESULTS} \label{sec:Main}

\subsection{PROOF OF CONVEXITY}
\begin{thm} \label{thm:ConvObj}
If $f$ is an absolutely symmetric lower-semicontinuous convex function, $\Reg{\Kb}$ is a convex function and $\C$ is a convex set, then the problem \eqref{eq:ObjGen} is a convex optimization problem.
\end{thm}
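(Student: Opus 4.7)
The plan is to reduce the claim to the composition of a convex function with an affine map, together with the fact that the sum of convex functions over a convex set is a convex problem. Since $R(\mathbf{K})$ is convex and $\mathcal{C}$ is convex by hypothesis, the only substantive part is showing that $\mathbf{K}\mapsto f\!\left(\sigma\!\left(F(\mathbf{K})^{-1}\right)\right)$ is convex in $\mathbf{K}$.

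The first step is to write $F(\mathbf{K})^{-1}$ in closed form. The relation $\mathbf{x}=F(\mathbf{K})\mathbf{w}$ inverts to $\mathbf{w}=F(\mathbf{K})^{-1}\mathbf{x}$, and from the one-step recursion $x_{t+1}=\tilde{A}_t(K_t)\,x_t+D_t w_t$ we can read off $w_0=D_0^{-1}x_1$ and $w_t=D_t^{-1}x_{t+1}-D_t^{-1}\tilde{A}_t(K_t)\,x_t$ for $t\ge 1$. This exhibits $F(\mathbf{K})^{-1}$ as a block lower bidiagonal matrix
\[
F(\mathbf{K})^{-1}=\begin{pmatrix} D_0^{-1} & 0 & \cdots & 0 \\ -D_1^{-1}\tilde{A}_1(K_1) & D_1^{-1} & \cdots & 0 \\ 0 & -D_2^{-1}\tilde{A}_2(K_2) & \ddots & \vdots \\ \vdots & & \ddots & D_{N-1}^{-1} \end{pmatrix},
\]
whose entries depend \emph{affinely} on $\mathbf{K}$ because $\tilde{A}_t(K_t)=A_t+B_tK_t$ is affine in $K_t$. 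This is the essential algebraic observation: the inverse map linearizes the dependence on the feedback gains, even though $F(\mathbf{K})$ itself is a highly nonlinear (product) function of $\mathbf{K}$.

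The second step is to invoke the result of Lewis (cited as \cite{lewis1995convex}): for any absolutely symmetric lower-semicontinuous convex $f$, the spectral function $X\mapsto f(\sigma(X))$ is convex on matrices. Composing this convex function with the affine map $\mathbf{K}\mapsto F(\mathbf{K})^{-1}$ produces a convex function of $\mathbf{K}$. Adding the convex regularizer $R(\mathbf{K})$ preserves convexity, and minimizing over the convex set $\mathcal{C}$ then gives a convex optimization problem, which is exactly \eqref{eq:ObjGen}.

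The only place where care is required is step one: one must justify that the bidiagonal formula is indeed the inverse and that it is well-defined globally (each $D_t$ is assumed square and invertible in the problem formulation, so $F(\mathbf{K})$ is invertible for every $\mathbf{K}$, and the affine formula is valid for all $\mathbf{K}$). I do not anticipate a genuine obstacle; the nontrivial insight is conceptual rather than technical, namely that reparameterizing through the inverse map converts a nonconvex product structure into an affine one, after which Lewis's theorem and standard convexity-preserving operations finish the argument.
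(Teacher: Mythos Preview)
Your proposal is correct and follows essentially the same route as the paper: exhibit $F(\mathbf{K})^{-1}$ explicitly as a block lower-bidiagonal matrix whose entries are affine in $\mathbf{K}$ (via $\tilde{A}_t=A_t+B_tK_t$), then invoke Lewis's theorem to conclude that $f(\sigma(\cdot))$ composed with this affine map is convex, and finish by adding the convex regularizer over the convex constraint set. There is no substantive difference between your argument and the paper's.
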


\begin{proof}
The proof relies on the structure of $\inve{\Acal{\Kb}{N}}$. Rewriting the discrete-time dynamics equations, we have: \[\noise_0=\inve{D_0}\s_1,\noise_t=\inve{D_t}\s_{t+1}-\inve{D_t}\tilde{A}_t\s_t \text{ for } t\geq 1.\]
It can be shown that $\inve{\Acal{\Kb}{N}}$ is given by
\[\left[\begin{matrix}
  \inve{D_0}             & 0         & \ldots    & \ldots & \quad 0 \\
  -\inve{D_1}\Ad_1        & \inve{D_1}         & \ldots    & \ldots & \quad 0 \\
  0             & -\inve{D_2}\Ad_2    & \inve{D_2}         & \ldots & \quad 0 \\
  \vdots        & \vdots    & \vdots    & \cdots & \quad \vdots\\
  0             & 0         & 0         & \ldots & \quad \inve{D_{N-1}}
 \end{matrix}\right] \]
This can be verified by simple matrix multiplication. Now, the convexity is obvious since $\Ad_t=A_t+B_tK_t$ is a linear function of $\Kb$, and so is $\inve{\Acal{\Kb}{N}}$. Since $f$ is an absolutely symmetric lower-semicontinuous convex function, $f(\sv{X})$ is a convex function of $X$ \cite{lewis1995convex}. Thus, $f(\sv{\inv{\Acal{\Kb}{N}}})$ is the composition of an affine function in $\Kb$ with a convex function, and is hence convex. The function $\Reg{\Kb}$ is known to be convex and so are the constraints $\Kb \in \C$. Hence, the overall problem is a convex optimization problem. \end{proof}

\subsection{SUBOPTIMALITY BOUNDS}
We are using convex surrogates for the $\FHtwo{N},\FHinf{N}$ norms. Thus, it makes sense to ask the question: How far are the optimal solutions to the convex surrogates from those of the original problem? We answer this question by proving multiplicative suboptimality bounds: We prove that the ratio of the $\FHtwo{N}$ norm of the convex surrogate solution and the $\FHtwo{N}$-optimal solution is bounded above by a quantity that decreases as the variance of the singular vector of $\inv{\Acal{\Kb}{N}}$ at the optimum. Although these bounds may be quite loose, they provide qualitative guidance about when the algorithm would perform well.
\begin{thm}\label{thm:HtwoSUB}
Let the solution to the convex optimization and original problem be:
\[\optt{\Kb}=\argmin_{\Kb \in \C} \svama{\inv{\Acal{\Kb}{N}}} \textrm{ (Convex Opt)} \]
\[\opt{\Kb}=\argmin_{\Kb \in \C} \sum_{i} \powb{\svai{\Acal{\Kb}{N}}{i}}{2} \textrm{ (Original Opt)}\]
respectively. Let $\opt{\Aca}=\inv{\Acal{\opt{\Kb}}{N}}, \optt{\Aca}=\inv{\Acal{\optt{\Kb}}{N}}$. Let
\[\optt{\sigma}=\left[\powb{\frac{\svai{\optt{\Aca}}{2}}{\svai{\optt{\Aca}}{nN}}}{2},\ldots,\powb{\frac{\svai{\optt{\Aca}}{2}}{\svai{\optt{\Aca}}{2}}}{2}\right]\] \[\opt{\sigma}=\left[\powb{\frac{\svai{\opt{\Aca}}{nN}}{\svai{\opt{\Aca}}{nN}}}{2},\ldots,\powb{\frac{\svai{\opt{\Aca}}{nN}}{\svai{\opt{\Aca}}{2}}}{2}\right]\]
Then,
\[\frac{\FHtwo{N}(\optt{\Kb})}{\FHtwo{N}(\opt{\Kb})} \leq \br{\frac{nN}{nN-1}}\expb{\frac{\Vari{\optt{\sigma}}-\Vari{\opt{\sigma}}}{2}}\]
\end{thm}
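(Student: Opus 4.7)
Rewrite $\FHtwo{N}(\Kb) = \sum_{i=1}^{nN}\svai{\inv{\Acal{\Kb}{N}}}{i}^{-2}$ using the fact that the singular values of a matrix and its inverse are mutual reciprocals, and denote $\tilde\sigma_i=\svai{\optt{\Aca}}{i}$, $\sigma^*_i=\svai{\opt{\Aca}}{i}$ in descending order. The plan is to upper-bound $\FHtwo{N}(\optt{\Kb})$ and lower-bound $\FHtwo{N}(\opt{\Kb})$ separately. For the numerator, since $\tilde\sigma_1$ is the \emph{largest} singular value, $\tilde\sigma_1^{-2}$ is the smallest of the reciprocals, hence $\tilde\sigma_1^{-2}\leq\tfrac{1}{nN-1}\sum_{i=2}^{nN}\tilde\sigma_i^{-2}$, which yields $\FHtwo{N}(\optt{\Kb}) \leq \tfrac{nN}{nN-1}\sum_{i=2}^{nN}\tilde\sigma_i^{-2}$ and produces the advertised $\tfrac{nN}{nN-1}$ prefactor. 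For the denominator, I just drop the $i=1$ term: $\FHtwo{N}(\opt{\Kb})\geq\sum_{i=2}^{nN}(\sigma^*_i)^{-2}$. Factoring out $\tilde\sigma_2^{-2}$ and $(\sigma^*_{nN})^{-2}$ exactly reproduces the definitions of $\optt\sigma$ and $\opt\sigma$ as $(nN-1)$-term averages, reducing the claim to
\[
\frac{(\sigma^*_{nN})^2\,\bar{\optt\sigma}}{\tilde\sigma_2^{\,2}\,\bar{\opt\sigma}} \;\leq\; \expb{\tfrac{1}{2}\br{\Vari{\optt\sigma}-\Vari{\opt\sigma}}},
\]
where $\bar{\cdot}$ denotes arithmetic mean.

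The crux is a two-sided quantitative AM--GM inequality. For positive $a_1,\ldots,a_n$ a second-order Taylor expansion of $\log$ with remainder gives $\log\bar a - \overline{\log a} = \tfrac{1}{n}\sum_i(a_i-\bar a)^2/(2\xi_i^2)$ for some $\xi_i$ between $a_i$ and $\bar a$. The carefully chosen normalizations in the theorem force every entry of $\optt\sigma$ to be $\geq 1$ and every entry of $\opt\sigma$ to be $\leq 1$, which bounds $\xi_i\geq 1$ and $\xi_i\leq 1$ respectively. This delivers the matched pair
\[
\log\bar{\optt\sigma} \leq \overline{\log\optt\sigma}+\tfrac{1}{2}\Vari{\optt\sigma},\qquad
\log\bar{\opt\sigma} \geq \overline{\log\opt\sigma}+\tfrac{1}{2}\Vari{\opt\sigma},
\]
i.e.\ an upper bound exactly where the numerator needs one and a lower bound exactly where the denominator needs one.

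To close the argument I would invoke the finite-horizon Bode identity, which follows from the block-triangular form of $\inv{\Acal{\Kb}{N}}$ established in the proof of Theorem \ref{thm:ConvObj}: $\prod_i\svai{\inv{\Acal{\Kb}{N}}}{i} = \prod_t\abs{\det(\inv{D_t})} =: C$, independent of $\Kb$. A short calculation then gives $\overline{\log\optt\sigma} = 2\log\tilde\sigma_2 - \tfrac{2(\log C-\log\tilde\sigma_1)}{nN-1}$ and the analogous expression for $\overline{\log\opt\sigma}$ in terms of $\sigma^*_{nN}$ and $\sigma^*_1$. Substituting these into the two Taylor inequalities and subtracting, the $2\log(\tilde\sigma_2/\sigma^*_{nN})$ contributions cancel against the prefactor $(\sigma^*_{nN}/\tilde\sigma_2)^2$, leaving
\[
\log\frac{(\sigma^*_{nN})^2\,\bar{\optt\sigma}}{\tilde\sigma_2^{\,2}\,\bar{\opt\sigma}} \;\leq\; \frac{2(\log\tilde\sigma_1-\log\sigma^*_1)}{nN-1} + \frac{\Vari{\optt\sigma}-\Vari{\opt\sigma}}{2}.
\]
Since $\optt{\Kb}$ is the minimizer of $\svama{\inv{\Acal{\cdot}{N}}}$ over the feasible set $\C$ and $\opt{\Kb}\in\C$, we have $\tilde\sigma_1\leq\sigma^*_1$, so the first term on the right is non-positive; exponentiating finishes the proof. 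The main obstacle is the variance-corrected AM--GM step; what makes it tractable is the deliberate asymmetry in the normalizations of $\optt\sigma$ and $\opt\sigma$, which places the two vectors on opposite sides of $1$ and forces the Taylor remainder to point the correct way on each side of the ratio.
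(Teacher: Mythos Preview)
Your proof is correct and follows essentially the same route as the paper: drop the $i=1$ term to extract the $\tfrac{nN}{nN-1}$ factor, convert arithmetic means to geometric means via a variance-corrected AM--GM inequality, collapse the geometric means using the block-triangular determinant identity $\prod_i\svai{\inve{\Aca(\Kb)}}{i}=\prod_t\abs{\det D_t^{-1}}$, and close with the optimality inequality $\tilde\sigma_1\leq\sigma^*_1$. The only difference is that the paper invokes Holder's defect formula (Becker \cite{becker2012variance}) as a black box, giving $\log\mathrm{AM}(a)-\log\mathrm{GM}(a)=\tfrac{\mu}{2}\Vari{a}$ with $\mu\in[a_1^{-2},a_m^{-2}]$, whereas you rederive the same two-sided bound from a second-order Taylor expansion of $\log$; the normalizations that place $\optt\sigma\geq 1$ and $\opt\sigma\leq 1$ play exactly the role of the $a_1^{-2},a_m^{-2}$ endpoints in the paper's version.
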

\begin{proof}
The proof relies on Holder's defect formula which quantifies the gap in the AM-GM inequality \cite{becker2012variance}. For any numbers $0<a_m\leq \ldots \leq a_1$, we have:
\begin{align*}
&\br{\frac{\sum_{i=1}^{m} a_i}{m}}\expb{-\frac{\mu}{2}\Vari{a}} = \powb{\prod_{i=1}^m a_i }{1/m} \\
\end{align*}
where $\mu \in \left[\powb{\frac{1}{a_1}}{2},\powb{\frac{1}{a_m}}{2}\right]$.  Plugging in the lower and upper bounds for $\mu$, we get
\begin{align*}
&\br{\frac{\sum_{i=1}^{m} a_i}{m}}\expb{-\frac{\Vari{a/a_1}}{2}} \geq \powb{\prod_{i=1}^m a_i }{1/m} \\
&\br{\frac{\sum_{i=1}^{m} a_i}{m}}\expb{-\frac{\Vari{a/a_m}}{2}} \leq \powb{\prod_{i=1}^m a_i }{1/m}.
\end{align*}
Using this inequality with $a_i=\powb{\svai{\opt{\Aca}}{nN-i+1}}{-2},i=1,2,3,\ldots,nN-1$, we get
\begin{align*}
& \frac{\FHtwo{N}(\opt{\Kb})}{nN-1} \geq \frac{1}{nN-1}\sum_{i=2}^{nN} \frac{1}{\powb{{\svai{\opt{\Aca}}{i}}}{2}} \\
& \geq \expb{\frac{\Vari{\opt{\sigma}}}{2}}\powb{\prod_{i=2}^{nN} \frac{1}{\powb{\svai{\opt{\Aca}}{i}}{2}}}{\frac{1}{nN-1}} \\
& = c\expb{\frac{\Vari{\opt{\sigma}}}{2}}\powb{\svama{\opt{\Aca}}}{\frac{2}{nN-1}}
\end{align*}
where $c=\powb{\prod_{t=0}^{N-1}\detb{D_t}}{\frac{2}{nN-1}}$ and the last equality follows since $\detb{\opt{\Aca}}=\prod_{t=0}^{N-1}\detb{D_t}$. Since $\optt{\Kb}$ minimizes $\svama{\inve{\Acal{\Kb}{N}}}$, we have
\begin{align*}
& \frac{\FHtwo{N}(\opt{\Kb})}{nN-1} \geq c\expb{\frac{\Vari{\opt{\sigma}}}{2}}\powb{\svama{\optt{\Aca}}}{\frac{2}{nN-1}} \\
& \geq \expb{\frac{\Vari{\opt{\sigma}}}{2}}\powb{\prod_{i=2}^{nN} \powb{\frac{1}{\svai{\optt{\Aca}}{i}}}{2}}{\frac{1}{nN-1}} \\
& \geq \expb{\frac{\Vari{\opt{\sigma}}}{2}-\frac{\Vari{\optt{\sigma}}}{2}}\br{\frac{\sum_{i=2}^{nN}\frac{1}{\powb{\svai{\optt{\Aca}}{i}}{2}}}{nN-1}} \\
& \geq \br{\frac{nN-1}{nN}} \expb{\frac{\Vari{\opt{\sigma}}}{2}-\frac{\Vari{\optt{\sigma}}}{2}} \frac{\FHtwo{N}(\optt{\Kb})}{nN-1}.
\end{align*}
The result follows from simple algebra now.
\end{proof}
\begin{thm}\label{thm:HinfSUB}
Let the solution to the convex optimization and original problem be:
\[\optt{\Kb}=\argmin_{\Kb \in \C} \sum_{i=1}^{nN-1} \svai{\inv{\Acal{\Kb}{N}}}{i} \textrm{ (Convex Opt)} \]
\[\opt{\Kb}=\argmin_{\Kb \in \C} \svama{\Acal{\Kb}{N}} \textrm{ (Original Opt)}\]
respectively. Let $\opt{\Aca}=\inv{\Acal{\opt{\Kb}}{N}}, \optt{\Aca}=\inv{\Acal{\optt{\Kb}}{N}}$. Let
\[\optt{\sigma}=\left[\frac{\svai{\optt{\Aca}}{nN-1}}{\svai{\optt{\Aca}}{1}},\ldots,\frac{\svai{\optt{\Aca}}{1}}{\svai{\optt{\Aca}}{1}}\right]\] \[\opt{\sigma}=\left[\frac{\svai{\opt{\Aca}}{nN-1}}{\svai{\opt{\Aca}}{nN-1}},\ldots,\frac{\svai{\opt{\Aca}}{1}}{\svai{\opt{\Aca}}{nN-1}}\right]\]
Then,
\[\frac{\FHinf{N}(\optt{\Kb})}{\FHinf{N}(\opt{\Kb})} \leq \expb{(nN-1)\br{\frac{\Vari{\opt{\sigma}}-\Vari{\optt{\sigma}}}{2}}}\]
\end{thm}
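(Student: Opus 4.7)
The plan is to mirror the argument in Theorem~\ref{thm:HtwoSUB}, but with first powers of singular values in place of squared ones. The first observation is that $\FHinf{N}(\Kb) = \svama{\Acal{\Kb}{N}} = 1/\svai{\inv{\Acal{\Kb}{N}}}{nN}$, and that the block-triangular structure of $\inv{\Acal{\Kb}{N}}$ established in the proof of Theorem~\ref{thm:ConvObj} forces $\prod_{i=1}^{nN}\svai{\inv{\Acal{\Kb}{N}}}{i} = \prod_{t=0}^{N-1}\abs{\detb{\inv{D_t}}}$, a quantity independent of $\Kb$. Consequently $\prod_{i=1}^{nN-1}\svai{\inv{\Acal{\Kb}{N}}}{i} = c\,\FHinf{N}(\Kb)$ for a constant $c>0$, so bounding the ratio $\FHinf{N}(\optt{\Kb})/\FHinf{N}(\opt{\Kb})$ reduces to bounding the ratio of the top-$(nN-1)$ singular value products of $\optt{\Aca}$ and $\opt{\Aca}$.

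Next I would apply Holder's defect formula (exactly as in the proof of Theorem~\ref{thm:HtwoSUB}) to the first $nN-1$ singular values of each of $\opt{\Aca}$ and $\optt{\Aca}$, viewed as a decreasing sequence $a_1\geq\ldots\geq a_{nN-1}$. For $\opt{\Aca}$, picking $\mu = 1/a_{nN-1}^2$ (the upper end of the admissible range) yields the lower bound
\begin{equation*}
\prod_{i=1}^{nN-1}\svai{\opt{\Aca}}{i} \;\geq\; \br{\tfrac{1}{nN-1}\sum_{i=1}^{nN-1}\svai{\opt{\Aca}}{i}}^{nN-1}\expb{-\tfrac{nN-1}{2}\Vari{\opt{\sigma}}},
\end{equation*}
since $\Vari{\opt{\sigma}}$ equals $\Vari{a}/a_{nN-1}^2$ by the definition of $\opt{\sigma}$ in the theorem statement. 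Symmetrically, for $\optt{\Aca}$ with $\mu = 1/a_1^2$ (the lower end of the admissible range) one obtains
\begin{equation*}
\br{\tfrac{1}{nN-1}\sum_{i=1}^{nN-1}\svai{\optt{\Aca}}{i}}^{nN-1} \;\geq\; \br{\prod_{i=1}^{nN-1}\svai{\optt{\Aca}}{i}}\expb{\tfrac{nN-1}{2}\Vari{\optt{\sigma}}},
\end{equation*}
matching the normalization of $\optt{\sigma}$ by $\svai{\optt{\Aca}}{1}$.

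I would then invoke the optimality of $\optt{\Kb}$ for the convex surrogate~\eqref{eq:ObjHinf}, which gives $\sum_{i=1}^{nN-1}\svai{\opt{\Aca}}{i} \geq \sum_{i=1}^{nN-1}\svai{\optt{\Aca}}{i}$. Chaining the lower bound on the product for $\opt{\Aca}$, this arithmetic-mean inequality, and the lower bound on the arithmetic mean for $\optt{\Aca}$ yields
\begin{equation*}
\prod_{i=1}^{nN-1}\svai{\opt{\Aca}}{i} \;\geq\; \br{\prod_{i=1}^{nN-1}\svai{\optt{\Aca}}{i}}\expb{\tfrac{nN-1}{2}\br{\Vari{\optt{\sigma}}-\Vari{\opt{\sigma}}}}.
\end{equation*}
Substituting the determinant identity $\prod_{i=1}^{nN-1}\svai{\inv{\Acal{\Kb}{N}}}{i} = c\,\FHinf{N}(\Kb)$ on both sides cancels the constant $c$ and rearranges into the claimed ratio bound.

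The main obstacle is purely bookkeeping: making sure that each direction of the Holder-defect inequality is paired with the correct normalization ($\opt{\sigma}$ by the smallest of the first $nN-1$ singular values of $\opt{\Aca}$, $\optt{\sigma}$ by the largest of those of $\optt{\Aca}$), so that the two exponential factors combine with the correct sign into $\Vari{\opt{\sigma}}-\Vari{\optt{\sigma}}$. Unlike the $\FHtwo$ case, no extra $nN/(nN-1)$ prefactor appears, because the reduction $\FHinf \leftrightarrow \prod_{i<nN}\svai{\cdot}{i}$ does not go through a sum-to-max inequality.
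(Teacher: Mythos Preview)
Your proposal is correct and follows essentially the same argument as the paper's proof: both use the determinant identity to write $\FHinf{N}(\Kb)$ as a constant times $\prod_{i=1}^{nN-1}\svai{\inv{\Acal{\Kb}{N}}}{i}$, apply the two directions of Holder's defect formula with the same normalizations for $\opt{\sigma}$ and $\optt{\sigma}$, and bridge them via optimality of $\optt{\Kb}$ for the convex surrogate. The only cosmetic difference is that the paper works with $(nN-1)$-th roots throughout and raises to the power $nN-1$ at the end, whereas you work with the full products directly.
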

\begin{proof}
The proof follows a similar structure as the previous theorem and relies on Holder's defect formula. Let $c=\prod_{t=0}^{N-1} \detb{D_t}$. Using the same inequalities with $a_i=\svai{\opt{\Aca}}{i},i=1,2,\ldots,nN-1$, we get
\begin{align*}
& \powb{\FHinf{N}(\opt{\Kb})}{\frac{1}{nN-1}}= c\powb{\prod_{i=1}^{nN-1} \svai{\opt{\Aca}}{i}}{\frac{1}{nN-1}} \\
& \geq \frac{c\expb{-\frac{\Vari{\opt{\sigma}}}{2}}}{nN-1}\left(\sum_{i=1}^{nN-1} {\svai{\opt{\Aca}}{i}}\right)
\end{align*}
where $c=\powb{\prod_{t=0}^{N-1}\detb{D_t}}{\frac{2}{nN-1}}$. Since $\optt{\Kb}$ minimizes $\sum_{i=1}^{nN-1} \svai{\inve{\Acal{\Kb}{N}}}{i}$, we have
\begin{align*}
& \powb{\FHinf{N}(\opt{\Kb})}{\frac{1}{nN-1}} \geq \frac{c\expb{\frac{-\Vari{\opt{\sigma}}}{2}}}{nN-1}\br{\sum_{i=1}^{nN-1}\svai{\optt{\Aca}}{i}} \\
& \geq c\expb{\frac{\Vari{\optt{\sigma}}}{2}-\frac{\Vari{\opt{\sigma}}}{2}}\powb{\prod_{i=1}^{nN-1}\svai{\optt{\Aca}}{i}}{\frac{1}{nN-1}} \\
& = \expb{\frac{\Vari{\optt{\sigma}}}{2}-\frac{\Vari{\opt{\sigma}}}{2}} \powb{\FHinf{N}(\optt{\Kb})}{\frac{1}{nN-1}}.
\end{align*}
The result follows from simple algebra now.
\end{proof}
\subsection{INTERPRETATION OF BOUNDS}
The bounds have the following interpretation: Since the product of singular values is constrained to be fixed, stable systems (with small $\Htwo,\Hinf$ norm) would have all of their singular values close to each other. Thus, if the singular values at the solution discovered by our algorithm are close to each other, we can expect that our solution is close to the true optimum. Further, the bounds say that the only thing that matters is the spread of the singular values relative to the spread of singular values at the optimal solution. A side-effect of the analysis is that it suggests that the spectral norm of $\inv{\Acal{\Kb}{N}}$ be used as a surrogate for the $\FHtwo{N}$ norm and the nuclear norm be a surrogate for the $\FHinf{N}$ norm, since optimizing these surrogates produces solutions with suboptimality bounds on the original objectives.

Finally note that although the bounds depend on the (unknown) optimal solution $\opt{\Kb}$, we can still get a useful bound for the $\FHtwo{N}$ case by simply dropping the effect of the negative term so that
\[\frac{\FHtwo{N}(\optt{\Kb})}{\FHtwo{N}(\opt{\Kb})} \leq \br{\frac{nN}{nN-1}}\expb{\frac{\Vari{\optt{\sigma}}}{2}}.\]
which can be computed after solving the convex problem to get $\optt{\Kb}$. A finer analysis may be possible by looking at the minimum possible value of $\Vari{\opt{\sigma}}$, just based on the block-bidiagonal structure of the matrix $\inve{\Acal{\Kb}{N}}$, but we leave this for future work.

\section{ALGORITHMS AND COMPUTATION} \label{sec:Algos}

In this paper, our primary focus is to discuss the properties of the new convex formulation of structured controller synthesis we developed here. Algorithms for solving the resulting convex optimization problem \eqref{eq:ObjGen} is a topic we will investigate in depth in future work. In most cases, problem \eqref{eq:ObjGen} can be reformulated as a semidefinite programming problem and solved using off-the-shelf interior point methods. However, although theoretically polynomial time, off-the-shelf solvers tend to be inefficient in practice and do not scale. In this section, we lay out some algorithmic options including the one we used in our numerical experiments (section \ref{sec:NumExpt}).

When the objective used is the nuclear norm, $\sum_{i=1}^{nN} \svai{\inv{\Acal{\Kb}{N}}}{i}$, we show that it is possible to optimize the objective using standard Quasi-Newton approaches. The nuclear norm is a nonsmooth function in general, but given the special structure of the matrices appearing in our problem, we show that it is differentiable. For a matrix $X$, the subdifferential of the  nuclear norm $\norm{X}_*$ at $X$ is given by
\[\{U\tran{V}+W:\tran{U}W=0 \text{ or } WV=0,\norm{W}_2\leq 1 \}\]
where $X=U\Sigma \tran{V}$ is the singular value decomposition of $X$. For our problem $X=\inve{\Acal{\Kb}{N}}$, which has a non-zero determinant and hence is a nonsingular square matrix irrespective of the value of $\Kb$. Thus, the subdifferential is a singleton ($\tran{U}W=0 \implies W=0$ as $U$ is full rank and square). This means that the nuclear norm is a differentiable function in our problem and one can use standard gradient descent and Quasi Newton methods to minimize it. These methods are orders of magnitude more efficient than other approaches (reformulating as an SDP and using off-the-shelf interior point methods). They still require computing the SVD of an $nN \times nN$ matrix at every iteration, which will get prohibitively expensive when $nN$ is of the order of several thousands. However, the structure of $\tran{\Acal{\Kb}{N}}\Acal{\Kb}{N}$ is block-tridiagonal and efficient algorithms have been proposed for computing the eigenvalues of such matrices (see \cite{EigenTriSVD} and the references therein). Since the singular values of $\Acal{\Kb}{N}$ are simply square roots of eigenvalues of $\tran{\Acal{\Kb}{N}}\Acal{\Kb}{N}$, this approach could give us efficient algorithms for computing the SVD of $\Acal{\Kb}{N}$.

When the objective is the spectral norm $\svama{\inv{\Acal{\Kb}{N}}}$, we can reformulate the problem as a semidefinite programming problem (SDP):
\begin{align*}
& \mini_{t,\Kb  \in \C} t+\Reg{\Kb} \\
& \text{Subject to } tI \geq \begin{pmatrix} 0 & \tran{\inv{\Acal{\Kb}{N}}} \\ \inv{\Acal{\Kb}{N}} & 0 \end{pmatrix}
\end{align*}
The log-barrier for the semidefinite constraint can be rewritten as $\logb{\detb{t^2-\tran{\inve{\Acal{\Kb}{N}}}\inve{\Acal{\Kb}{N}}}}$ using Schur complements. The matrix $\tran{\inv{\Acal{\Kb}{N}}}\inv{\Acal{\Kb}{N}}$ is a symmetric positive definite block-tridiagonal matrix, which is a special case of a chordal sparsity pattern \cite{andersen2010implementation}. This means that computing the gradient and Newton step for the log-barrier is efficient, with complexity growing as $O(N)$. Thus, at least for the case where the objective is the spectral norm, we can develop efficient interior point methods.
\section{NUMERICAL EXPERIMENTS} \label{sec:NumExpt}
\subsection{Comparing Algorithms: Decentralized Control}
In this section, we compare different approaches to controller synthesis. We work with discrete-time LTI systems over a fixed horizon $N$ with $A_t=A,B_t=B=I,D_t=D=I$. Further, we will use $\C=\{K: K_{ij} =0 \not\in S\}$, where $S$ is the set of non-zero indices of $K$.  The control design methodologies we compare are:\\
\emph{\Mih:} This refers to nonconvex approaches for both the $\FHtwo{N}$ and $\FHinf{N}$ norms. The $\FHtwo{N}$ norm is a differentiable function and we use a standard LBFGS method \cite{schmidt2012minfunc} to minimize it. The $\FHinf{N}$ norm is nondifferentiable, but only at points where the maximum singular value of $\Acal{\Kb}{N}$ is not unique. We use a nonsmooth Quasi Newton method \cite{lewis2012nonsmooth} to minimize it (using the freely available software implementation HANSO \cite{HANSO}).\\
\emph{\CON:} The convex control synthesis described here. In the experiments described here, we use the following objective:
\begin{align}
& \mini_{K}  \frac{1}{nN}\left(\sum_{i=1}^{m}\svai{\inve{\Acal{K}{N}}}{i}\right) \label{eq:Expt} \\
& \text{where} \nonumber \\
& \inve{\Acal{K}{N}} = \left[\begin{matrix}
  I             & 0          & \ldots & \quad 0 \\
  -(A+BK)        & I         & \ldots & \quad 0 \\
  0             & -(A+BK)    & \ldots & \quad 0 \\
  \vdots        & \vdots     & \cdots & \quad \vdots\\
  0             & 0          & \ldots & \quad I
 \end{matrix}\right] \nonumber\\
& \text{Subject to } \nonumber\\
 & K_{ij} = 0 \quad \forall (i,j) \not \in S \nonumber
\end{align}
with $m=nN-1$ as a surrogate for the $\FHinf{N}$ norm and $m=1$ for the $\FHtwo{N}$ norm. Although these objectives are non differentiable, we find that an off-the-shelf LBFGS optimizer \cite{schmidt2012minfunc} works well and use it in our experiments here. \\
\emph{\OPT}: The optimal solution to the problem in the absence of the constraint $\C$. This is simply the solution to a standard LQR problem for the $\FHtwo{N}$ case. For the $\FHinf{N}$ norm, this is computed by solving a series of LQ games with objective:
\[\sum_{t=1}^N \tran{\s_t}{\s_t}-\sum_{t=0}^{N-1} \gamma^2\tran{\noise_t}\noise_t\]
where the controller chooses $\ug$ to minimize the cost while an adversary chooses $\noise_t$ so as to maximize the cost. There is critical value of $\gamma$ below which the upper value of this game is unbounded. This critical value of $\gamma$  is precisely the $\FHinf{N}$ norm and the resulting policies for the controller at this value of $\gamma$ is the $\FHinf{N}$-optimal control policy. For any value of $\gamma$, the solution of the game can be computed by solving a set of Ricatti equations backward in time \cite{bacsar2008h}. \\
We work with a dynamical system formed by coupling a set of systems with unstable dynamics $A^i \in \R^{2 \times 2}$.
\[\s^{i}_{t+1} = A^i\s^{i}_t+\sum_j \eta_{ij}\s^{j}_t+\ug^{i}_t+\noise^{i}_t\]
where $\s^{i}$ denotes the state of the $i$-th system and $\eta_{ij}$ is a coupling coefficient between systems $i$ and $j$. The objective is to design controls $\ug=\{\ug^{i}\}$, in order to stabilize the overall system. In our examples, we use $N=5$ systems giving us a 10 dimensional state space. The $A^{i},\eta_{ij}$ are generated randomly, with each entry having a Gaussian distribution with mean $0$ and variance $10$. The sparsity pattern $S$ is also generated randomly by picking $20\%$ of the off-diagonal entries of $K$ and setting them to $0$. For both the $\CON,\Mih$ problems, we initialize the optimizer at the same point $\Kb=0$. 
\begin{figure}[htb]
\begin{center}
\includegraphics[width=.65\columnwidth]{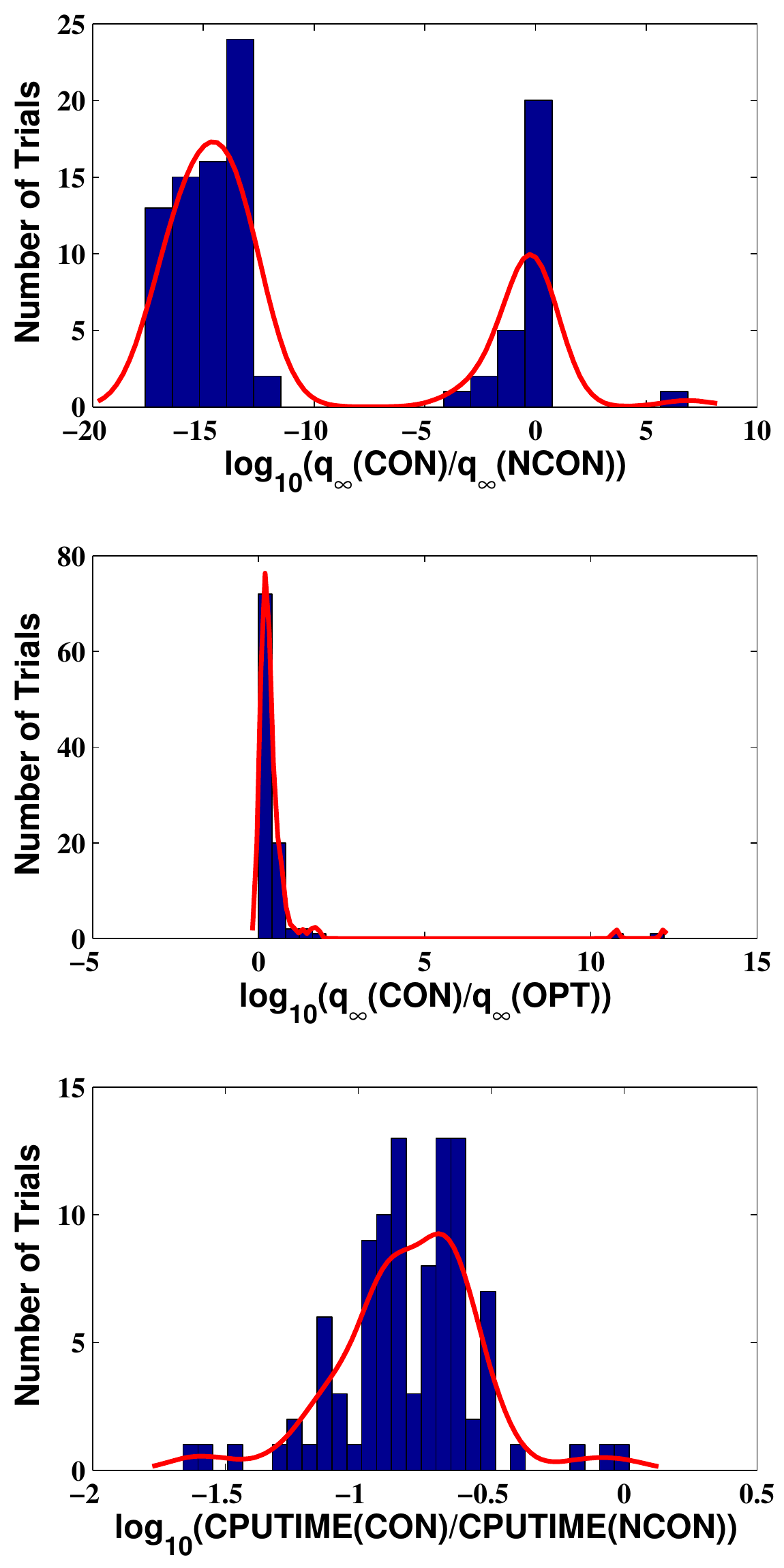}
\caption{Comparison of Algorithms for $\FHinf{N}$-norm Controller Synthesis. The blue bars represent histograms and the red curves kernel density estimates of the distribution of values.}\label{fig:ComparePoints}
\end{center}
\end{figure}
\begin{figure}[htb]
\begin{center}
\includegraphics[width=.65\columnwidth]{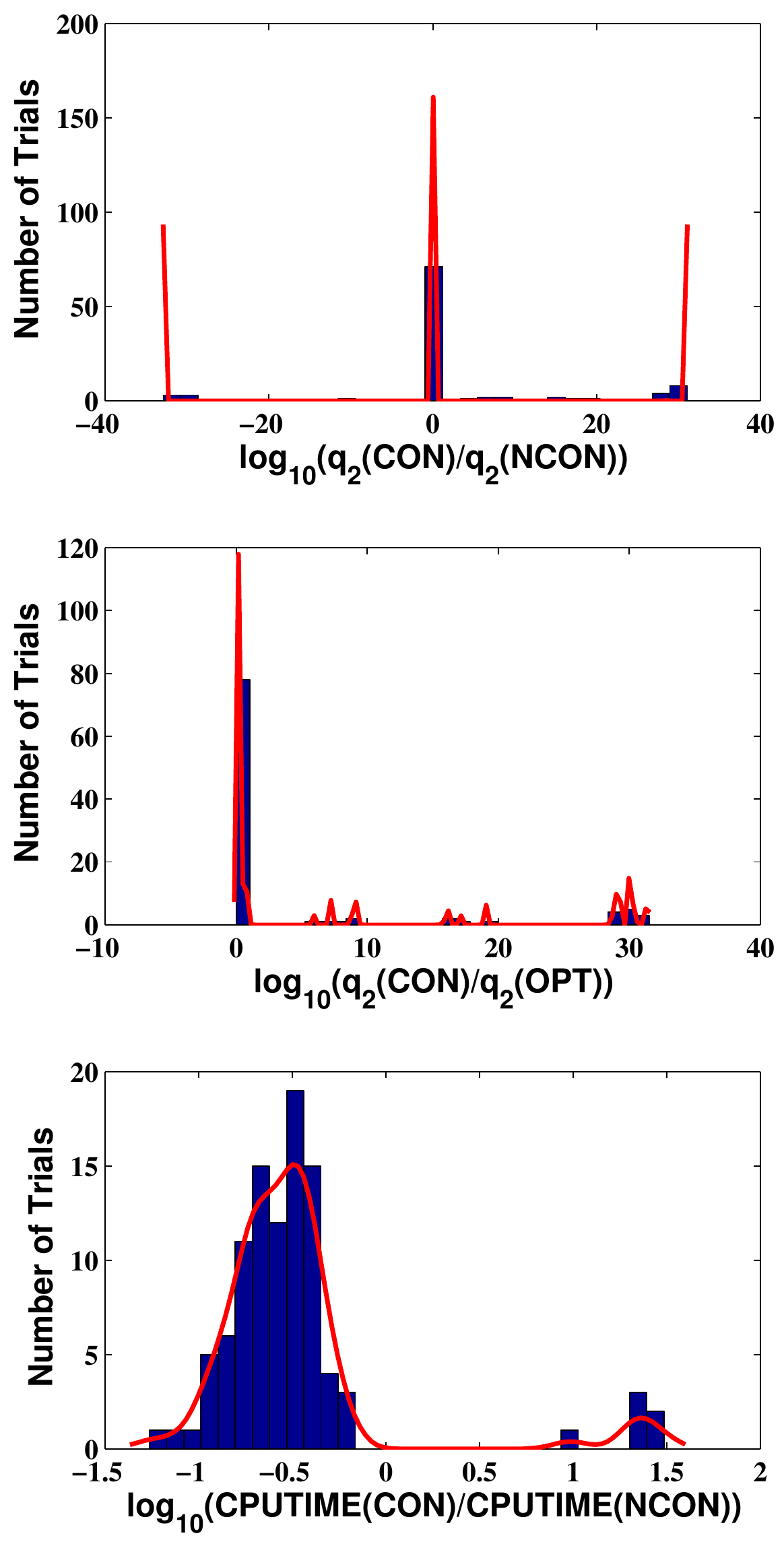}
\caption{Comparison of Algorithms for $\FHtwo{N}$-norm Controller Synthesis. The blue bars represent histograms and the red curves kernel density estimates of the distribution of values.}\label{fig:ComparePoints2}
\end{center}
\end{figure}
For the $\FHinf{N}$ norm, we present results comparing the approaches over 100 trials. The $\FHinf{N}$ norm of the solution obtained by the $\CON$  approach to that found by $\Mih,\OPT$ in figure \ref{fig:ComparePoints}. We plot histograms of how the $\FHinf{N}$ compares between the $\CON,\Mih$ and $\OPT$ approaches. The red curves show kernel-density estimates of the distribution of values being plotted. The results show that $\CON$ consistently outperforms $\Mih$ and often achieves performance close to the centralized $\OPT$ solution. The x-axis denotes the ratio between objectives on a log scale. The y-axis shows the frequency with which a particular ratio is attained (out of a 100 trials). We also plot a histogram of computation times with the log of ratio of CPU times for the \CON~and \Mih~algorithms on the x-axis. Again, in terms of CPU times, the \CON~approach is consistently superior except for a small number of outliers. For the $\FHtwo{N}$ norm, we plot the results in figure \ref{fig:ComparePoints2}. Here, the \Mih~approach does better and beats the \CON~approach for most trials. However, in more than $70\%$ of the trials the $\FHtwo{N}$ norm of the solution found by \CON~is within $2\%$ of that found by \Mih. In terms of computation time, the \CON~approach retains superiority.

The numerical results indicate that the convex surrogates work well in many cases. However, they do fail in particular cases. In general, the surrogates seem to perform better on the $\FHinf{N}$ norm than the $\FHtwo{N}$ norm. The initial results are promising but we believe that further analytical and numerical work is required to exactly understand when the convex objectives proposed in this paper are good surrogates for the original nonconvex $\FHtwo{N}$ and $\FHinf{N}$ objectives.

\section{GENERALIZATION TO NONLINEAR SYSTEMS}
We now present a generalization of our approach to nonlinear systems. The essential idea is to study a nonlinear system in terms of sensitivities of system trajectories with respect to disturbances. Consider a control-affine nonlinear discrete-time system:
\begin{align*}
& \s_1=D_0\noise_0 \\
& \s_{t+1}=\Dyna_t(\s_t)+\DynB_t(\s_t)\ug_{t}+D_{t}\noise_{t} \quad (1\leq t \leq N-1)
\end{align*}
where $\Dyna_t : \R^n \mapsto \R^n$ and $\DynB : \R^n \mapsto \R^{n \times \nc}$, $D_{t} \in \R^{n \times n}, \s_t \in \R^n, w_t \in \R^n, \ug_t \in \R^{\nc}$. Suppose that $0$ is an equilibrium point (if not, we simply translate the coordinates to make this the case). Now we seek to design a controller $u_t=K_t\phi(\s_t)$ where $\phi$ is any set of fixed ``features'' of the state on which we want the control to depend that minimizes deviations from the constant trajectory $[0,0,\ldots,0]$. We can look at the closed loop system:
\[\s_{t+1}=\Dyna_t(\s_t)+\DynB_t(\s_t)K_t\Dynf_t(\s_t)+D_{t}\noise_{t}\]
where $\Dynf_t(\s_t) \in \R^{\nm}, K_t \in \R^{\nc \times \nm}$. As before, let $\Kb=\{K_t: 1\leq t \leq N-1\}.$ Let $\Acal{\Kb}{N}(\wtraj)$ denote the (nonlinear) mapping from a sequence of disturbances $\wtraj=[\noise_0,\ldots,\noise_{N-1}]$ to the state space trajectory $\traj=[\s_1,\ldots_,\s_N]$. The finite-horizon $\FHinf{N}$ norm for a nonlinear system can be defined analogously as for a linear system.
\begin{align}
\max_{\wtraj \neq 0} \frac{\norm{\Acal{\Kb}{N}(\wtraj)}}{\norm{\wtraj}}. \label{eq:Hinfnl}
\end{align}
Given a state trajectory $\traj=[\s_1,\ldots,\s_N]$, we can recover the noise sequence as
\begin{align}
& \noise_0=\inve{D_0}\s_1 \nonumber \\
& \noise_t=\inve{D_t}\left(\s_{t+1}-\Dyna_t(\s_{t})-\DynB_t(\s_{t})K_t\Dynf_t(\s_{t})\right),t>0 \label{eq:Nonlin}
\end{align}
Thus the map $\Acal{\Kb}{N}$ is invertible. Let $\inve{\Acal{\Kb}{N}}$ denote the inverse. It can be shown (theorem \ref{thm:NonLinBound}) that the objective \eqref{eq:Hinfnl} (assuming it is finite) can be bounded above by
\[\sup_{\traj}{\powb{\frac{\sum_{i=1}^{nN-1}  \svai{\br{\frac{\partial \inv{\Acal{\Kb}{N}}(\traj)}{\partial \traj}}}{i}}{nN-1}}{nN-1}}\]
In the linear case, the maximization over $\traj$ is unnecessary since the term being maximized is independent of $\traj$. However, for a nonlinear system, the Jacobian of $\inv{\Acal{\Kb}{N}}(\traj)$ is a function of $\traj$ and an explicit maximization needs to be performed to compute the objective. Thus, we can formulate the control design problem as
\begin{align}
\min_{\Kb \in \C} \sup_{\traj}{\powb{\frac{\sum_{i=1}^{nN-1}  \svai{\br{\frac{\partial \inv{\Acal{\Kb}{N}}(\traj)}{\partial \traj}}}{i}}{nN-1}}{nN-1}} \label{eq:NonlinOpt}
\end{align}
The convexity of the above objective follows using a very similar proof as the linear case (see theorem \ref{thm:NonLinConvex}). Computing the objective (maximizing over $\traj$) in general would be a hard problem, so this result is only of theoretical interest in its current form. However, in future work, we hope to explore the computational aspects of this formulation more carefully. %We present a simple scalar example where the objective can be computed by exhaustive enumeration in section \ref{sec:NumExpt} and show that the controller designed works well.
%However, one could consider replacing the objective with
%\[\ExP{\wtraj}{\powb{\frac{\sum_{i=1}^{nN-1}  \svai{\br{\frac{\partial \inv{\Acal{\Kb}{N}}(\traj)}{\partial \traj}}_{\traj=\Acal(\wtraj)}}{i}}{nN-1}}{nN-1}}\]
%with some distribution over $\wtraj$. In this formulation, the objective would not be a strict upper bound on the $\FHinf{N}$ norm: However, the above arguments indicate that it would still produce sensible controllers. One could then use Monte-Carlo type sampling algorithms with stochastic gradient descent techniques in order to optimize the feedback gains $\Kb$. This is very similar to policy gradient techniques in reinforcement learning \cite{sutton1999policy} and can be used even in cases where the system dynamics ($\Dyna_t,\DynB_t$) are unknown. The advantage of this formulation over those methods is that we know that the objective is convex (even though it cannot be evaluated exactly) and one can use efficient stochastic gradient techniques for convex optimization \cite{nemirovskici1983problem} in order to optimize the objective.

\section{Discussion and Related Work}

There have been three major classes of prior work in synthesizing structured controllers: Frequency domain approaches, dynamic programming and nonconvex optimization approaches. We compare the relative merits of the different approaches in this section.

In frequency domain approaches, problems are typically formulated as follows:
\begin{align*}
\mini_{K} & \parallel\textrm{Closed loop system with feedback K}\parallel \\
\text{Subject to } & K \text{ Stabilizing }, K \in \C
\end{align*}
where $\norm{\cdot}$ is typically the $\Htwo$ or $\Hinf$ norm. In general, these are solved by reparameterizing the problem in terms of a Youla parameter (via a nonlinear transformation), and imposing special conditions on $\C$ (like quadratic invariance) that guarantee that the constraints $\C$ can be translated into convex constraints on the Youla parameter \cite{rotkowitz2006}\cite{qi2004structured}. There are multiple limitations of these approaches:\\
(1) Only specific kinds of constraints can be imposed on the controller. Many of the examples have the restriction that the structure of the controller mirrors that of the plant.\\
(2) They result in infinite dimensional convex programs in general. One can solve them using a sequence of convex programming problems, but these approaches are susceptible to numerical issues and the degree of the resulting controllers may be ill-behaved, leading to practical problems in terms of implementing them.\\
(3) The approaches rely on frequency domain notions and cannot handle time-varying systems.\\
In the special case of poset-causal systems (where the structure of the plant and controller can be described in terms of a partial order \cite{shah2013}), the problem can be decomposed when the performance metric is the $\Htwo$ norm and explicit state-space solutions are available by solving Ricatti equations for subsystems and combining the results. For the $\Hinf$ norm, a state-space solution using an LMI approach was developed in \cite{Scherer}.

Another thread of work on decentralized control looks at special cases where dynamic programming techniques can be used in spite of the decentralization constraints. The advantage of these approaches is that they directly handle finite horizon and time-varying approaches. For the LEQG cost-criterion, a dynamic programming approach was developed in \cite{fan1994centralized} for the case of 1-step delay in a 2-agent decentralized control problem. In \cite{swigart2010explicit}, the authors show that for the case of 2 agents (a block-lower triangular structure in $A,B$ with 2 blocks) can be solved via dynamic programming. In \cite{lamperski2013optimal}, the authors develop a dynamic programming solution that generalizes this and applies to general ``partially-nested'' systems allowing for both sparsity and delays.

All the above methods work for special structures on the plant and controller (quadratic invariance/partial nestedness) under which decentralized controllers can be synthesized using either convex optimization or dynamic programming methods.

In very recent work \cite{Lavaei2013}, the authors pose decentralized control (in the discrete-time, finite horizon, linear quadratic setting) as a rank-constrained semidefinite programming problem. By dropping the rank constraint, one can obtain a convex relaxation of the problem. The relaxed problem provides a solution to the original problem only when the relaxed problem has a rank-1 solution. However, it is unknown when this can be guaranteed, and how a useful controller can be recovered from a higher-rank solution. Further, the SDP posed in this work grows very quickly with the problem dimension.

Our work differs from these previous works in one fundamental way: Rather than looking for special decentralization structures that can be solved tractably under standard control objectives, we formulate a new control objective that helps us solve problems with \emph{arbitrary} decentralization constraints. In fact, we can handle \emph{arbitrary convex constraints} - decentralization constraints that impose a sparsity pattern on $\Kb$ are a special case of this. We can also handle time-varying linear systems. Although the objective is nonstandard, we have provided theoretical and numerical evidence that it is a sensible control objective. The only other approaches that handle all these problems are nonconvex approaches \cite{zhai2001decentralized,apkarian2008mixed,linfarjovTAC13admm}. We have shown that our approach outperforms a standard nonconvex approach, both in terms of performance of resulting controller and in computation times.

We also believe that this was the first approach to exploit a fundamental limitation (Bode's sensitivity integral) to develop efficient control design algorithms. The fact that the spectrum of the input output map satisfies a conservation law (the sum of the logs of singular values is fixed) is a limitation which says that reducing some of the singular values is bound to increase the others. However, this limitation allows us to approximate the difficult problem of minimizing the $\Htwo$ or $\Hinf$ norm with the easier problem of minimizing a convex surrogate, leading to efficient solution.
%The limitations of our approach are as follows:
%\begin{itemize}
%\item We require the input-output map (from $\wtraj$ to $\traj$) to be invertible. This means that we cannot handle cases where the disturbance $\noise_t$ affects only certain dimensions of the state space: This was encoded using the requirement that $D_t$ be invertible.
%\item We cannot handle dynamic output feedback and control costs directly: We listed an approach to approximate these in sections  \ref{sec:DynOutput} and \ref{sec:PenControl}, but this requires taking a limit of increasingly ill-conditioned optimization problems and might be susceptible to numerical issues.
%\item We do not have formal guarantees regarding whether the controllers produced by our approach perform well under traditional metrics ($\Htwo,\Hinf$).
%\end{itemize}

\section{CONCLUSION}

We have argued that the framework developed seems promising and overcomes limitations of previous works on computationally tractable approaches to structured controller synthesis. Although the control objective used is non-standard, we have argued why it is a sensible objective, and we also presented numerical examples showing that it produces controllers outperforming other nonconvex approaches. Further, we proved suboptimality bounds that give guidance on when our solution is good even with respect to the original ($\Htwo/\Hinf$) metrics. There are three major directions for future work: 1) Investigating the effect of various objectives in our family of control objectives, 2) Developing efficient solvers for the resulting convex optimization problems and 3) Deriving computationally efficient algorithms for nonlinear systems.
%\section{ACKNOWLEDGEMENTS}

%This work was supported by the NSF.

\section*{Acknowledgements}
This work was supported by the NSF. We would like to thank Mehran Mesbahi, Andy Lamperski and Parikshit Shah for helpful discussions and feedback on earlier versions of this manuscript. We would also like to thank Mihailo Jovanovic for suggestions that significantly improved the numerical results section.

\bibliographystyle{IEEEtran}
\bibliography{Ref}

\section{APPENDIX} \label{sec:Appendix}

\subsection{Penalizing Control Effort} \label{sec:PenControl}

%So far we have only talked about states and not penalized control effort explicitly. This is required for both implementation (control limits) and numerical (conditioning) reasons. In this framework, there are various options for penalizing control effort. One option is to have a quadratic penalty on the feedback gains $\norm{\Kb}^2_F$. Other options, including hard constraints on the entries of $\Kb$, can also be used. In general, any convex penalty and convex constraints on $\Kb$ are allowed.

A more direct approach is to augment the state to include the controls. We define an augmented problem with $\bar{\s_t} \in \R^{\ns+\nc},\overline{\noise}_t \in \R^{\ns+\nc}$.
\begin{align*}
& \overline{A}_t = \begin{pmatrix} A_t & 0 \\ 0 & 0 \end{pmatrix}, \overline{B}_t = \begin{pmatrix} B_t \\ R_t \end{pmatrix}, \overline{D}_t = \begin{pmatrix} D_t & 0 \\ 0 & \gamma I \end{pmatrix} \\
& \overline{\s}_{t+1} = \overline{A}_t\overline{\s}_{t}+\overline{B}_t\ug_t+\overline{D}_t\overline{\noise}_t
\end{align*}
Partitioning the new state $\overline{\s}_t=\begin{pmatrix} \s_t \\ \tilde{\s}_t \end{pmatrix}$, $\overline{\noise}_t=\begin{pmatrix} \noise_t \\ \tilde{\noise}_t \end{pmatrix}$, we have:
\begin{align*}
\s_{t+1}=A_t\s_t+B_t\ug_t+D_t\noise_t, \tilde{\s}_{t+1} = R_t\ug_t+\gamma \tilde{\noise}_t
\end{align*}
Given this,
\begin{align*}
& \sum_{t=1}^N \tran{\overline{\s}_t}\overline{\s}_t = \sum_{t=1}^N \tran{\s_t}\s_t+\\
& \quad \sum_{t=1}^{N-1}\tranb{R_t\ug_t+\gamma\tilde{\noise}_{t}}\br{R_t\ug_t+\gamma\tilde{\noise}_{t}}+\gamma^2 \tran{\noise_0}\noise_0
\end{align*}
 In the limit $\gamma \to 0$, we recover the standard LQR cost. However, setting $\gamma = 0$ violates the condition of invertibility. Thus, solving the problem with an augmented state $\overline{\s} \in \R^{\nc+\nu}$, $\overline{\noise} \in \R^{\nc+\nu}$, \[\overline{A}_t=\begin{pmatrix} A_t & 0 \\ 0 & 0 \end{pmatrix}, \overline{B}_t=\begin{pmatrix} B_t \\ R_t \end{pmatrix}, \overline{D}_t=\begin{pmatrix} D_t & 0\\0 &  \gamma I \end{pmatrix}\]
solves the problem with a quadratic control cost in the limit $\gamma \to 0$. The caveat is that the problems \eqref{eq:ObjHtwo}\eqref{eq:ObjHinf} become increasingly ill-conditioned as $\gamma \to 0$. However, we should be able to solve the problem for a small value of $\gamma$, which models the quadratic controls cost closely but still leads to a sufficiently well-conditioned problem that we can solve numerically.

\subsection{Dynamic Output Feedback} \label{sec:DynOutput}

So far, we have described the problem in terms of direct state feedback $\ug_t=K_t \s_t$. However, we can also model output feedback $u_t=K_tC_t\s_t$ by simply defining $\tilde{K_t}=K_tC_t$ where $C_t \in \R^{\nm \times \ns}$ is a measurement matrix that produces $\nm$ measurements given the state. Convex constraints on $K_t$ will translate into convex constraints on $\tilde{K_t}$, since $\tilde{K_t}$ is a linear function of $K_t$. If we wanted to allow our controls to depend on the previous $\nh$ measurements (dynamic output feedback), we simply create an augmented state $\overline{\s_t}=\begin{pmatrix}\s_t \\ \vdots \\ \s_{t-\nh} \end{pmatrix}$. Then, we can define $K_t \in \R^{\nc \times \nh\nm}$ and \[\tilde{K_t}=K_t \begin{pmatrix}C_t & 0 & \ldots & 0 \\ 0 & C_{t-1} & \ldots & 0 \\ \vdots & \vdots & \vdots & \vdots \\ 0 & 0 & \ldots & C_{t-\nh} \end{pmatrix}\]
and an augmented dynamics

\begin{align*}
& \overline{A}_t = \begin{pmatrix}A_t & 0 & \ldots & 0 & 0 \\ I & 0 & \ldots & 0 & 0 \\ \vdots & \vdots & \vdots & \vdots & \vdots \\0 & 0 & \ldots & I & 0 \end{pmatrix}, \overline{B}_t = \begin{pmatrix} B_t \\ 0 \\ \vdots \\ 0\end{pmatrix} \\
 & \overline{D}_t = \begin{pmatrix}D_t & 0 & \ldots & 0 & 0 \\ 0 & \gamma I & \ldots & 0 & 0 \\ \vdots & \vdots & \vdots & \vdots & \vdots \\0 & 0 & \ldots & 0 & \gamma I \end{pmatrix}
\end{align*}

Again, we need to set $\gamma=0$ to exactly match the standard output feedback problem but that violates the assumption of invertibility. We can consider taking $\gamma \to 0$ and recovering the solution as a limiting case, as in the previous section. 
%We can also incorporate delays in this framework: If there is a uniform $p$ step delay for all measurements, we simply define:

%\[\tilde{K_t}=K_t \begin{pmatrix}0 & \ldots & 0 & 0 \\ \vdots & \vdots & \vdots & \vdots \\0 & \ldots & C_{t-p} & \ldots  \\ \vdots & \vdots & \vdots & \vdots \\ 0 & 0 & \ldots & C_{t-\nh} \end{pmatrix}.\]
%Variable delays can also be modeled this way: The only restriction being that the delays must be known exactly in advance.

\subsection{Proofs}

\begin{thm}\label{thm:HinfUB}
\begin{align*}
&\FHinf{N}(\Kb)=\svama{\Acal{\Kb}{N}} \\
& \leq \prod_{t=0}^{N-1} \detb{D_t}\powb{\frac{\sum_{i=1}^{\ns N-1} \svai{\inve{\Acal{\Kb}{N}}}{i}}{\ns N-1}}{\ns N-1}
\end{align*}
\end{thm}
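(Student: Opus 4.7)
The plan is to combine three ingredients: (i) the block-lower-triangular structure of $\Acal{\Kb}{N}$, which pins down its determinant; (ii) the elementary identity relating the singular values of a nonsingular matrix to those of its inverse; and (iii) the arithmetic–geometric mean inequality applied to the top $nN-1$ singular values of $\inve{\Acal{\Kb}{N}}$.

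First I would write $\Aca = \Acal{\Kb}{N}$ and let its singular values be $\sigma_1 \geq \sigma_2 \geq \cdots \geq \sigma_{nN} > 0$, so that $\FHinf{N}(\Kb) = \sigma_1$. Because $\Aca$ is block lower triangular with diagonal blocks $D_0,D_1,\ldots,D_{N-1}$, its determinant is simply $\det(\Aca) = \prod_{t=0}^{N-1} \det(D_t)$, and hence $\prod_{i=1}^{nN} \sigma_i = \prod_{t=0}^{N-1} \detb{D_t}$ (in absolute value). Invertibility of each $D_t$ guarantees this product is nonzero, so all $\sigma_i > 0$ and $\inve{\Aca}$ is well defined with singular values $1/\sigma_{nN}\geq 1/\sigma_{nN-1}\geq\cdots\geq 1/\sigma_1$; in particular $\svai{\inve{\Aca}}{i} = 1/\sigma_{nN-i+1}$ for $i=1,\ldots,nN$.

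Next I would isolate $\sigma_1$ from the determinant identity as
\[
\sigma_1 \;=\; \frac{\prod_{t=0}^{N-1} \detb{D_t}}{\prod_{j=2}^{nN}\sigma_j} \;=\; \Bigl(\prod_{t=0}^{N-1}\detb{D_t}\Bigr)\prod_{j=2}^{nN}\frac{1}{\sigma_j}.
\]
The product $\prod_{j=2}^{nN}(1/\sigma_j)$ is exactly the product of the top $nN-1$ singular values of $\inve{\Aca}$, namely $\prod_{i=1}^{nN-1}\svai{\inve{\Aca}}{i}$. Applying AM--GM to these $nN-1$ positive numbers yields
\[
\prod_{i=1}^{nN-1}\svai{\inve{\Aca}}{i} \;\leq\; \left(\frac{\sum_{i=1}^{nN-1}\svai{\inve{\Aca}}{i}}{nN-1}\right)^{nN-1}.
\]
Substituting this into the displayed formula for $\sigma_1$ gives the claimed bound.

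I do not anticipate a genuine obstacle: once one notices that $\Aca$ is block triangular, the determinant is free, the singular values of $\Aca$ and $\inve{\Aca}$ are reciprocals in reverse order, and the rest is a one-line application of AM--GM. The only mild subtlety is dropping the smallest singular value of $\inve{\Aca}$ (the $nN$-th one, which equals $1/\sigma_1$) when forming the sum over $i=1,\ldots,nN-1$; this exclusion is precisely what allows the determinant constraint to solve for $\sigma_1$, and it is why the exponent $nN-1$ (rather than $nN$) appears on the right-hand side.
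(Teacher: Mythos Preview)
Your proposal is correct and follows essentially the same approach as the paper: exploit the block-lower-triangular structure to compute $\det(\Aca)=\prod_t\det(D_t)$, use the reciprocal relation between the singular values of $\Aca$ and $\inve{\Aca}$ to write $\svama{\Aca}=\bigl(\prod_t\det(D_t)\bigr)\prod_{i=1}^{nN-1}\svai{\inve{\Aca}}{i}$, and then apply AM--GM.
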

\begin{proof}
Since $\Acal{\Kb}{N}$ is a block lower triangular matrix (a reflection of the fact that we have a causal linear system) , its determinant is simply the product of determinants of diagonal blocks:  $\detb{\Acal{\Kb}{N}}=\prod_t \detb{D_t}=c$ \emph{independent} of the values of $\Ad_t$. In fact, this result is a generalization of Bode's classical sensitivity integral result and has been studied in \cite{iglesias2001tradeoffs}. Since the product of singular values is equal to the determinant, we have
\begin{align*}
&\svama{\Acal{\Kb}{N}} = \frac{c}{\displaystyle\prod_{i=2}^{nN} \svai{\Acal{\Kb}{N}}{i}} =c\displaystyle\prod_{i=1}^{nN-1} \svai{\inve{\Acal{\Kb}{N}}}{i}
\end{align*}
where the last equality follows because the singular values of $\inve{\Acal{\Kb}{N}}$ are simply reciprocals of the singular values of $\Acal{\Kb}{N}$. The result now follows using the AM-GM inequality.
\end{proof}

\begin{thm}\label{thm:HtwoUB}
\[\FHtwo{N}(\Kb) \leq nN\br{\prod_{t=0}^{N-1} \detb{D_t}}^2\powb{\svama{\inve{\Acal{\Kb}{N}}}}{2(\ns N-1)}\]
\end{thm}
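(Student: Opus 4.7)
The plan is to start from the expression $\FHtwo{N}(\Kb) = \sum_{i=1}^{nN} \svai{\Acal{\Kb}{N}}{i}^2$ given in \eqref{eq:Ftwo} and reduce everything to a bound on the largest singular value of $\Acal{\Kb}{N}$. The trivial observation that each term in the sum is at most $\svama{\Acal{\Kb}{N}}^2$ gives the preliminary inequality
\[
\FHtwo{N}(\Kb) \;\leq\; nN \cdot \svama{\Acal{\Kb}{N}}^2,
\]
so the whole problem reduces to bounding $\svama{\Acal{\Kb}{N}}$ from above in terms of $\svama{\inve{\Acal{\Kb}{N}}}$ and the fixed determinant constant $c = \prod_{t=0}^{N-1}\detb{D_t}$.

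For this second step I would reuse the determinant identity already invoked in the proof of Theorem~\ref{thm:HinfUB}: because $\Acal{\Kb}{N}$ is block lower triangular with diagonal blocks $D_t$, its determinant equals $c$ irrespective of $\Kb$. Writing $\sigma_1 \geq \sigma_2 \geq \cdots \geq \sigma_{nN}$ for the singular values of $\Acal{\Kb}{N}$, this gives $\sigma_1 \prod_{i=2}^{nN} \sigma_i = c$, and since every $\sigma_i \geq \sigma_{nN}$ we get $\prod_{i=2}^{nN} \sigma_i \geq \sigma_{nN}^{\,nN-1}$, hence
\[
\svama{\Acal{\Kb}{N}} \;=\; \sigma_1 \;\leq\; \frac{c}{\sigma_{nN}^{\,nN-1}}.
\]
Finally I would use the elementary fact that the singular values of $\inve{\Acal{\Kb}{N}}$ are the reciprocals of those of $\Acal{\Kb}{N}$ in reverse order, so $1/\sigma_{nN} = \svama{\inve{\Acal{\Kb}{N}}}$, turning the previous display into $\svama{\Acal{\Kb}{N}} \leq c\,\svama{\inve{\Acal{\Kb}{N}}}^{\,nN-1}$.

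Squaring this and combining with the first inequality yields exactly the claimed bound
\[
\FHtwo{N}(\Kb) \;\leq\; nN\, c^{2} \, \svama{\inve{\Acal{\Kb}{N}}}^{\,2(nN-1)}.
\]
There is no serious obstacle here: the proof of Theorem~\ref{thm:HinfUB} already did the harder work of exploiting Bode's identity and relating singular values of $\Acal{\Kb}{N}$ to those of its inverse. The present statement uses only two crude inequalities (``sum of squares $\leq$ $nN$ times the max square'' and ``product of $nN-1$ numbers $\geq$ smallest raised to that power''), rather than the AM--GM argument used for the $\Hinf$ bound. In fact the bound here is substantially looser than the $\Hinf$ one, which is consistent with the paper's remark that the spectral norm of $\inve{\Acal{\Kb}{N}}$ is only a convex surrogate for the $\Htwo$ norm.
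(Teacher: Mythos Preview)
Your proof is correct and follows essentially the same approach as the paper's. The paper bounds each individual singular value $\svai{\Acal{\Kb}{N}}{i}$ by $c\,\svama{\inve{\Acal{\Kb}{N}}}^{\,nN-1}$ and then sums the $nN$ squared terms, whereas you first reduce the sum to $nN\,\svama{\Acal{\Kb}{N}}^2$ and then bound $\svama{\Acal{\Kb}{N}}$ by the same quantity; since both routes land on the identical per-term bound $c\,\svama{\inve{\Acal{\Kb}{N}}}^{\,nN-1}$, this is just a reordering of the same two inequalities.
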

\begin{proof}
Let $\prod_{t=0}^{N-1} \detb{D_t}=c$. From the above argument, we can express $\svai{\Acal{\Kb}{N}}{i}$ as
\begin{align*}
c \prod_{j \neq nN-i+1} \svai{\inv{\Acal{\Kb}{N}}}{j} \leq  c\powb{\svama{\inve{\Acal{\Kb}{N}}}}{(\ns N-1)}.
\end{align*}
The expression for $\FHtwo{N}(\Kb)$ is
\begin{align*}
& \sum_{i=1}^{nN} \powb{\svai{\Acal{\Kb}{N}}{i}}{2} \leq nN c^2\powb{\svama{\inve{\Acal{\Kb}{N}}}}{2(\ns N-1)}.
\end{align*}
\end{proof}

%\begin{comment}
\begin{thm}\label{thm:NonLinConvex}
For the nonlinear system described in \eqref{eq:Nonlin}, the function
\[\sup_{\traj} \powb{\frac{\sum_{i=1}^{nN-1}\svai{\br{\frac{\partial \inv{\Acal{\Kb}{N}}(\traj)}{\partial \traj}}}{i}}{nN-1}}{nN-1}\]
is convex in $\Kb$.
\end{thm}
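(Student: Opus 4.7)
The plan is to mirror the convexity argument of Theorem \ref{thm:ConvObj} by establishing that the Jacobian $J(\Kb,\traj) = \partial \inve{\Acal{\Kb}{N}}(\traj)/\partial \traj$, viewed as a function of $\Kb$ with $\traj$ held fixed, is an affine function of $\Kb$. Once this reduction is in place, the rest of the proof is a standard chain of convexity-preserving operations: Lewis's theorem on absolutely symmetric functions of singular values, composition with an affine map, composition with a monotone convex outer function, and the pointwise supremum.

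First I would write out $\inve{\Acal{\Kb}{N}}(\traj)$ explicitly using \eqref{eq:Nonlin}, which gives $\noise_0 = \inve{D_0}\s_1$ and $\noise_t = \inve{D_t}\br{\s_{t+1} - \Dyna_t(\s_t) - \DynB_t(\s_t) K_t \Dynf_t(\s_t)}$ for $t \geq 1$. Differentiating with respect to the block components of $\traj=(\s_1,\ldots,\s_N)$ yields a block bidiagonal matrix $J(\Kb,\traj)$ whose upper-diagonal blocks are $\inve{D_t}$ (independent of $\Kb$) and whose diagonal blocks take the form $-\inve{D_t}\br{\partial \Dyna_t(\s_t)/\partial \s_t + \partial [\DynB_t(\s_t) K_t \Dynf_t(\s_t)]/\partial \s_t}$. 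The critical observation is that the product $\DynB_t(\s_t) K_t \Dynf_t(\s_t)$ depends linearly on $K_t$ at every fixed $\s_t$, so applying the product rule component by component in $\s_t$ keeps this derivative linear in $K_t$. Consequently, for every fixed $\traj$, the map $\Kb \mapsto J(\Kb,\traj)$ is affine, mirroring the role that the affine dependence of $\inve{\Acal{\Kb}{N}}$ on $\Kb$ played in the linear case.

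Next, I would invoke \cite{lewis1995convex} exactly as in Theorem \ref{thm:ConvObj}: the Ky-Fan $(nN-1)$-norm $X \mapsto \sum_{i=1}^{nN-1}\svai{X}{i}$ arises from the absolutely symmetric, lower-semicontinuous, convex function $f(x)=\sum_{i=1}^{nN-1}\ord{\abs{x}}{i}$ on $\R^{nN}$, hence it is convex in $X$. Composing this convex function with the affine map $\Kb \mapsto J(\Kb,\traj)$ gives a function convex in $\Kb$ for every fixed $\traj$. The outer map $y \mapsto (y/(nN-1))^{nN-1}$ is convex and nondecreasing on $[0,\infty)$, and the inner function takes only nonnegative values, so the composition remains convex in $\Kb$ for each $\traj$.

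Finally, the pointwise supremum of a family of convex functions is convex, so $\sup_{\traj}$ preserves convexity and the claim follows. The only real obstacle is the first step: verifying carefully that differentiating $\DynB_t(\s_t) K_t \Dynf_t(\s_t)$ with respect to components of $\s_t$ keeps the expression linear in $K_t$, with the rest of the Jacobian free of $\Kb$ altogether. Once this structural fact is noted, the argument is essentially identical to the linear case, with the extra outer supremum over $\traj$ causing no difficulty because convexity is preserved pointwise in $\traj$ before the supremum is taken.
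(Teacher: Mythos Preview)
Your proposal is correct and follows essentially the same approach as the paper: fix $\traj$, observe that the Jacobian of $\inve{\Acal{\Kb}{N}}$ is affine in $\Kb$ because $\DynB_t(\s_t)K_t\Dynf_t(\s_t)$ (and hence its $\s_t$-derivative) is linear in $K_t$, then invoke the convexity of the Ky-Fan norm, compose with the monotone convex power, and take the pointwise supremum. The only slip is cosmetic: the $\inve{D_t}$ blocks sit on the diagonal and the $\Kb$-dependent blocks on the subdiagonal (not upper-diagonal and diagonal, respectively), but this does not affect the argument.
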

\begin{proof}
First fix $\wtraj$ to an arbitrary value. From \eqref{eq:Nonlin}, we know that $\br{\frac{\partial \inv{\Acal{\Kb}{N}}(\traj)}{\partial \traj}}$ is of the form
\[\left[\begin{matrix}
  \inve{D_0}    & 0         & \ldots    & \ldots & \quad 0 \\
  -\inve{D_1}\frac{\partial \noise_1}{\partial \s_1}        & \inve{D_1}         & \ldots    & \ldots & \quad 0 \\
  0             & -\inve{D_2}\frac{\partial \noise_2}{\partial \s_2}    & \inve{D_2}         & \ldots & \quad 0 \\
  \vdots        & \vdots    & \vdots    & \cdots & \quad \vdots\\
  0             & 0         & 0         & \ldots & \quad \inve{D_{N-1}}
 \end{matrix}\right]\]
Since $\noise_t=\inve{D_t}\left(\s_{t+1}-\Dyna_t(\s_{t})-\DynB_t(\s_{t})K_t\Dynf_t(\s_{t})\right)$, $\noise_t$ is an affine function of $\Kb$. Hence, so is $\frac{\partial \noise_{t}}{\partial \s_t}$, for any $t$. Thus, the overall matrix $\br{\frac{\partial \inv{\Acal{\Kb}{N}}(\traj)}{\partial \traj}}=M(\Kb)$ is an affine function of $\Kb$. Thus, by composition properties, $\powb{\frac{\sum_{i=1}^{nN-1}\svai{M(\Kb)}{i}}{nN-1}}{nN-1}$ is a convex function of $\Kb$ for any fixed $\traj$. Taking a supremum over all $\traj$ preserves convexity, since the pointwise supremum of a set of convex functions is convex.
\end{proof}

\begin{thm}\label{thm:NonLinBound}
Consider the nonlinear system described in \eqref{eq:Nonlin}. Suppose that $\sup_{\wtraj \neq 0} \frac{\norm{\Acal{\Kb}{N}(\wtraj)}}{\norm{\wtraj}}$ is finite and the supremum is achieved at $\opt{\wtraj} \neq 0$ for all values of $\Kb$. Then, $\sup_{\wtraj \neq 0} \frac{\norm{\Acal{\Kb}{N}(\wtraj)}}{\norm{\wtraj}}$ is bounded above by
\[\sup_{\traj} \powb{\frac{\sum_{i=1}^{nN-1}\svai{\br{\frac{\partial \inv{\Acal{\Kb}{N}}(\traj)}{\partial \traj}}}{i}}{nN-1}}{nN-1}\]
\end{thm}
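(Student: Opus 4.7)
The plan is to parallel the proof of Theorem~\ref{thm:HinfUB}, replacing the linear map by the Jacobian of $F\equiv \Acal{\Kb}{N}$ at a disturbance $\wtraj$ and then using the inverse function theorem to re-express it in terms of $J(\traj)=\partial \inv{\Acal{\Kb}{N}}(\traj)/\partial \traj$ at the corresponding trajectory $\traj=F(\wtraj)$. Let $\lambda^{\ast}=\sup_{\wtraj\neq 0}\norm{F(\wtraj)}/\norm{\wtraj}$ and let $\opt{\wtraj}\neq 0$ be a maximizer with $\opt{\traj}=F(\opt{\wtraj})$. First I would reduce $\lambda^{\ast}$ to a pointwise bound on $\svama{J_F(\cdot)}$, where $J_F$ denotes the Jacobian of $F$. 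Differentiability of $\norm{F(\wtraj)}^{2}/\norm{\wtraj}^{2}$ at $\opt{\wtraj}\neq 0$ together with first-order optimality yields the identity $\tran{J_F(\opt{\wtraj})}\,\opt{\traj}=(\lambda^{\ast})^{2}\opt{\wtraj}$. Taking norms, dividing by $\norm{\opt{\wtraj}}$, and invoking the operator-norm bound $\norm{\tran{J_F(\opt{\wtraj})}\opt{\traj}}\leq \svama{J_F(\opt{\wtraj})}\norm{\opt{\traj}}$ gives $\lambda^{\ast}\leq \svama{J_F(\opt{\wtraj})}$.

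Next I would rewrite this bound in terms of $J(\traj)$ and exploit the fact that $\detb{J(\traj)}$ is constant. By the inverse function theorem, $J_F(\opt{\wtraj})=\inv{J(\opt{\traj})}$, so $\svama{J_F(\opt{\wtraj})}=1/\svam{J(\opt{\traj})}$. The explicit block lower-triangular form of $J(\traj)$ displayed in the proof of Theorem~\ref{thm:NonLinConvex} has diagonal blocks $\inve{D_{0}},\ldots,\inve{D_{N-1}}$, so $\detb{J(\traj)}=1/c$ with $c=\prod_{t=0}^{N-1}\detb{D_{t}}$, independent of $\traj$ and $\Kb$. Writing $\svam{J(\traj)}=(1/c)/\prod_{i=1}^{nN-1}\svai{J(\traj)}{i}$ and applying AM-GM to the product of the top $nN-1$ singular values (exactly as in Theorem~\ref{thm:HinfUB}) gives
\[
  \svama{\inv{J(\traj)}} \;=\; c\prod_{i=1}^{nN-1}\svai{J(\traj)}{i} \;\leq\; c\,\powb{\frac{\sum_{i=1}^{nN-1}\svai{J(\traj)}{i}}{nN-1}}{nN-1}.
\]
Taking the supremum over $\traj$ then recovers the stated bound (up to the prefactor $c$, which simply mirrors the $\prod_{t}\detb{D_{t}}$ appearing in the linear case).

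The main obstacle is the first step: turning the global nonlinear supremum $\sup_{\wtraj\neq 0}\norm{F(\wtraj)}/\norm{\wtraj}$ into a pointwise bound on $\svama{J_F(\cdot)}$. The hypothesis that the supremum is attained at some $\opt{\wtraj}\neq 0$ is precisely what makes the first-order-optimality argument go through; without it, one would instead have to invoke a mean-value inequality of the form $\norm{F(\wtraj)}\leq \br{\sup_{s\in[0,1]}\svama{J_F(s\wtraj)}}\norm{\wtraj}$ and then use surjectivity of $F$ (which follows from its invertibility, already established around \eqref{eq:Nonlin}) to convert the ambient supremum over $\wtraj$ into one over $\traj$. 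The remaining pieces -- the inverse-function-theorem identity, the fixed-determinant observation, and the AM-GM step -- are essentially routine extensions of the linear argument in Theorem~\ref{thm:HinfUB}.
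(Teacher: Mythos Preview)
Your proposal is correct and follows essentially the same route as the paper. The paper packages your first step (first-order optimality at $\opt{\wtraj}$ yielding $\lambda^{\ast}\leq \svama{J_F(\opt{\wtraj})}$, then bounding by $\sup_{\wtraj}\svama{J_F(\wtraj)}$) as a separate lemma (Theorem~\ref{thm:Nonlin}), and then proceeds exactly as you do: the fixed-determinant observation from the block-triangular Jacobian, the inverse-function-theorem identity $J_F(\wtraj)^{-1}=J(\traj)$, the change of variables from $\wtraj$ to $\traj$ via invertibility of $F$, and the AM--GM step. Your remark about the prefactor $c=\prod_t\detb{D_t}$ is also apt: the paper's own proof produces the same factor, so the theorem statement appears to silently drop it (compare with Theorem~\ref{thm:HinfUB}, where the factor is retained).
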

\begin{proof}
By theorem \ref{thm:Nonlin}, $\sup_{\wtraj \neq 0} \frac{\norm{\Acal{\Kb}{N}(\wtraj)}}{\norm{\wtraj}}$ is bounded above by
\[\sup_{\wtraj \neq 0} \svama{\frac{\partial \Acal{\Kb}{N}(\wtraj)}{\partial \wtraj}}.\]
Now, $M(\Kb)=\frac{\partial \Acal{\Kb}{N}(\wtraj)}{\partial \wtraj}$ is a lower-triangular matrix (since we have a causal system) and the diagonal blocks are given by $D_t$. Thus, $\detb{M(\Kb)}=\prod_{t=0}^{N-1} \detb{D_t}=c$, and we can rewrite  $\svama{M(\Kb)}$ as
$c{\prod_{i=1}^{nN-1}\svai{\inv{M(\Kb)}}{i}}$.
By the rules of calculus, we know that
\[\inv{M(\Kb)} = \br{\frac{\partial \inv{\Acal{\Kb}{N}}(\traj)}{\partial \traj}}_{\traj=\Acal{\Kb}{N}(\wtraj)}\]
Thus, the above objective reduces to
\[\sup_{\wtraj \neq 0} {\prod_{i=1}^{nN-1}\svai{\br{\frac{\partial \inv{\Acal{\Kb}{N}}(\traj)}{\partial \traj}}_{\traj=\Acal{\Kb}{N}(\wtraj)}}{i}}.\]
Given any $\traj$, we can find $\wtraj$ such that $\traj=\Acal{\Kb}{N}(\wtraj)$ (simply choose $\wtraj=\inv{\Acal{\Kb}{N}}(\traj)$). Thus, the above quantity is equal to
\[\sup_{\traj} {\prod_{i=1}^{nN-1}\svai{\br{\frac{\partial \inv{\Acal{\Kb}{N}}(\traj)}{\partial \traj}}}{i}}.\]
The result now follows using the AM-GM inequality.
\end{proof}
\begin{thm}\label{thm:Nonlin}
Let $g(\y):\R^l \mapsto \R^p$ be any differentiable function. If the function $\frac{\norm{g(\y)}_2}{\norm{\y}_2}$ attains its maximum at $\y^\ast$,
\[\sup_{y} \frac{\norm{g(\y)}_2}{\norm{\y}_2} \leq \sup_{\y \neq 0} \svama{\frac{\partial g(\y)}{\partial \y}} \quad \forall \y \neq 0.\]
\end{thm}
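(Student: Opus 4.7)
The plan is to leverage the first-order optimality condition at the maximizer $\y^\ast$. First I would let $\mu = \norm{g(\y^\ast)}_2/\norm{\y^\ast}_2$ denote the value of the supremum. Since $\y^\ast \neq 0$ is a critical point of the scale-invariant function $r(\y) = \norm{g(\y)}_2^2/\norm{\y}_2^2$ on the open set $\R^l \setminus \{0\}$, its gradient must vanish there. A direct computation gives
\[
\nabla r(\y^\ast) = \frac{2\, J(\y^\ast)^T g(\y^\ast)}{\norm{\y^\ast}_2^2} - \frac{2 \norm{g(\y^\ast)}_2^2}{\norm{\y^\ast}_2^4}\,\y^\ast = 0,
\]
where $J(\y) = \partial g(\y)/\partial \y$. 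Rearranging yields the critical-point identity $J(\y^\ast)^T g(\y^\ast) = \mu^2\, \y^\ast$.

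Next I would take the inner product of both sides with $\y^\ast$, obtaining $\inner{J(\y^\ast)\y^\ast}{g(\y^\ast)} = \mu^2\,\norm{\y^\ast}_2^2$. Applying Cauchy-Schwarz on the left, together with the spectral-norm bound $\norm{J(\y^\ast)\y^\ast}_2 \leq \svama{J(\y^\ast)}\,\norm{\y^\ast}_2$ and the identity $\norm{g(\y^\ast)}_2 = \mu\,\norm{\y^\ast}_2$, delivers $\mu^2 \norm{\y^\ast}_2^2 \leq \svama{J(\y^\ast)} \cdot \mu\, \norm{\y^\ast}_2^2$. Dividing through by $\mu\,\norm{\y^\ast}_2^2$ (valid when $\mu > 0$) collapses this to $\mu \leq \svama{J(\y^\ast)}$, which is manifestly bounded above by $\sup_{\y \neq 0} \svama{\partial g(\y)/\partial \y}$, as claimed.

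The main thing to watch out for is the degenerate case $\mu = 0$: here $g(\y^\ast)=0$, but the conclusion is trivial since singular values are nonnegative, so the division step is only invoked in the nontrivial regime. A subtler point worth flagging is that the argument relies crucially on the hypothesis that the supremum is \emph{attained} at some $\y^\ast \neq 0$, which is what permits the use of the stationarity condition; if one only had a maximizing sequence $\y_k$, the critical-point identity would have to be replaced by an asymptotic version, and one would need to argue that $\svama{J(\y_k)}$ remains bounded along the sequence to pass to the limit. Since the theorem statement explicitly assumes attainment, this complication does not arise here.
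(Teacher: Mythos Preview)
Your proof is correct and follows essentially the same approach as the paper: both derive the critical-point identity $J(\y^\ast)^T g(\y^\ast) = \mu^2 \y^\ast$ from first-order optimality and then extract the bound $\mu \leq \svama{J(\y^\ast)}$. The only minor difference is in the last step: the paper takes the $\ell_2$ norm of both sides of the identity directly, whereas you pair with $\y^\ast$ and invoke Cauchy--Schwarz; both routes yield the same inequality, and your handling of the degenerate case $\mu=0$ is a small bonus the paper leaves implicit.
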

\begin{proof}
 $\logb{\frac{\norm{g(\y)}^2}{\norm{\y}^2}}$ is differentiable at any $\y \neq 0$ and hence at $\y=\y^\ast$. Since this is an unconstrained optimization problem, we can write the optimality condition ($0$ gradient):
\[\frac{2\br{\frac{\partial g(\y)}{\partial \y}}_{\y=\y^\ast}g(\y^\ast)}{\norm{g(\y^\ast)}^2}=\frac{2\y^\ast}{\norm{\y^\ast}^2}\]
Taking the $\ell_2$ norm on both sides, we get
\[\frac{\norm{\br{\frac{\partial g(\y)}{\partial \y}}_{\y=\y^\ast}g(\y^\ast)}}{\norm{g(\y^\ast)}}=\frac{\norm{g(\y^\ast)}}{\norm{\y^\ast}}\]
Since $\y^{\ast}$ maximizes $\frac{\norm{g(\y)}}{\norm{\y}}$, for any $\y \neq 0$, we have:
\begin{align*}
& \frac{\norm{g(\y)}}{\norm{\y}} \leq \frac{\norm{g(\y^\ast)}}{\norm{\y^\ast}}=\frac{\norm{\br{\frac{\partial g(\y)}{\partial \y}}_{\y=\y^\ast}g(\y^\ast)}}{\norm{g(\y^\ast)}} \\
& \leq \svama{\frac{\partial g(\y)}{\partial \y}}_{\y=\y^\ast} \leq \max_{\y \neq 0} \svama{\frac{\partial g(\y)}{\partial \y}}.
\end{align*}
Taking supremum over $y$ on both sides, we get the result.
\end{proof}

\end{document}